\title{Trees over Infinite Structures \\ and Path Logics with Synchronization}
\author{Alex Spelten \qquad Wolfgang Thomas \qquad Sarah Winter
\institute{RWTH Aachen University\\ Germany}
\email{\{spelten,thomas,winter\}@automata.rwth-aachen.de}
}
\newtheorem{theorem}{Theorem}
\newtheorem{lemma}[theorem]{Lemma}
\newtheorem{proposition}[theorem]{Proposition}
\theoremstyle{plain}
\newtheorem{remark}[theorem]{Remark}
\theoremstyle{nonumberplain}
\newtheorem{proof}{Proof}
\newcommand{\eat}[1]{}
\renewcommand{\emptyset}{\protect\raisebox{.15ex}{\ensuremath{\not}}\ensuremath{\mathnormal{O}}}
\DeclareMathSymbol{\epsilon}{\mathalpha}{letters}{"22}
\DeclareMathSymbol{\varepsilon}{\mathalpha}{letters}{"0F}
\DeclareMathSymbol{\theta}{\mathalpha}{letters}{"23}
\DeclareMathSymbol{\vartheta}{\mathalpha}{letters}{"12}
\DeclareMathSymbol{\phi}{\mathalpha}{letters}{"27}
\DeclareMathSymbol{\varphi}{\mathalpha}{letters}{"1E}
\newcommand{\charfusion}[2]{%
  \def\ch@rfusion##1##2{%
    \ooalign{\hfil$##1#1$\hfil\cr\hfil$##2$\hfil\crcr}}%
      \mathop{%
      \vphantom{#1}%
      \mathpalette\ch@rfusion#2}\displaylimits}
\newcommand{\m}{\ensuremath{\mathcal{M}}}
\newcommand{\el}{\ensuremath{\mathcal{L}}}
\newcommand{\ml}{\ensuremath{\m\textrm{-}\el}}
\newcommand{\Sing}{\ensuremath{\textrm{Sing}}}
\newcommand{\Succ}{\ensuremath{\textrm{Succ}}}
\newcommand{\Mweak}{\ensuremath{\m^{\#}}}
\newcommand{\MweakE}{\ensuremath{\m^{\#}_E}}
\newcommand{\Mstrong}{\ensuremath{\m^*}}
\newcommand{\MstrongE}{\ensuremath{\m^*_E}}
\newcommand{\tp}{\ensuremath{\mathit{tp}}}
\newcommand{\Suc}{\ensuremath{\mathit{Suc}}}
\def\msos1s{MSO\ensuremath{^{\mathrm{S1S}}}}
\begin{document}
\maketitle

\begin{abstract}
We provide decidability and undecidability results on 
the model-checking problem for infinite tree structures. 
These tree structures are built from sequences of  
elements of infinite relational structures. More 
precisely, we deal with the tree iteration of a relational 
structure $\m$ in the sense of Shelah-Stupp. 
In contrast to classical results, where 
model-checking is shown decidable for MSO-logic, we 
show decidability of the tree model-checking problem 
for logics that allow only path quantifiers and chain 
quantifiers (where chains are subsets of paths), as they appear 
in branching time logics; however, at the same time,  
the tree is enriched by the equal-level relation 
(which holds between vertices $u$, $v$ if they are on 
the same tree level). We separate cleanly the tree logic 
from the logic used for expressing properties of the underlying 
structure $\m$. 
We illustrate the scope of the decidability results by showing 
that two slight extensions of the framework lead to undecidability.
In particular, this applies to the (stronger) tree iteration in 
the sense of Muchnik-Walukiewicz. 

\end{abstract}


\section{Introduction}\label{sec_intro}


A key result in the field of ``infinite-state model-checking'' is 
Rabin's Tree Theorem \cite{rab69}. 
It says that the monadic second-order theory (short: MSO-theory) 
of the binary tree 
is decidable. Many decidability results on theories of infinite
structures have been obtained by a reduction to Rabin's Tree Theorem. 
It is also well-known that a slight extension of the signature of 
the binary tree leads to undecidability: The expansion of the binary tree by the ``equal-level relation'' $E$ 
has an undecidable monadic theory. 

The situation changes when set quantification is restricted to ``chains'', 
i.e., sets that are linearly ordered by the partial tree ordering. It is known 
(\cite{tho90}) that for the unlabeled binary tree and also for any regular binary 
tree, the chain logic theory of the tree is decidable in the presence of $E$. 
This result is of interest in verification since a large number of logical 
concepts that occur in specifications of nonterminating systems refer to 
computation paths and their subsets (i.e., to chains), for example in branching time 
logics. The second-order quantifiers in these applications do not refer to global colorings 
of computation trees (for which monadic logic would be invoked) but rather 
to quantification over chains. The equal-level relation adds the feature 
of synchronization to computation paths. 

In recent years, a theory of words and trees over infinite alphabets emerged
(\cite{nsv04,bmssd,cg09}) that opens a way for generalizations. 
Here,  a computation 
path is a sequence of letters chosen from a relational structure 
$\m = (M, R_1, \ldots, R_k)$, which is infinite in general, rather than 
from a finite alphabet $\Sigma$. Instead of the binary tree obtained from 
the words of $\{0,1\}^*$ built from the two element alphabet $\{0,1\}$, 
the infinitely branching infinite tree with vertices in $M^*$ is considered.  

There are two fundamental constructions of a tree structure built from an ``alphabet structure''
$\m$, called ``weak'', respectively ``strong'' tree iteration of 
$\m$, and denoted here ${\cal M}^\#$, respectively ${\cal M}^*$. 
For ${\cal M} = (M, R_1, \ldots, R_k)$, let
$$\Mweak = (M^*, \preceq, S, R^*_1, \ldots, R^*_k)$$
where $S(u,v)$ holds if $v = um$ for some $u \in M^*, m \in M$, 
$\preceq$ is the reflexive transitive closure of $S$, and, for $\ell$-ary $R_i$, we have  
$R^*_i(v_1, \ldots, v_\ell)$ iff for some $z \in M^*$, $v_j = z m_j$ for $j = 1, \ldots, \ell$ such
that $R_i(m_1, \ldots, m_\ell)$ holds in $\m$. 
This iteration is also called Shelah-Stupp iteration, going back 
to \cite{she75,stu75}. 

The strong tree iteration $\Mstrong$ is obtained from the weak one by adjoining 
the ``clone predicate'' 
$$C = \{u \  m \ m \mid u \in M^* , m \in M\}$$ 
to the 
signature. It allows to connect two levels of the tree structure in a way that 
``unfolding'' becomes definable. 

As shown by Shelah and Stupp \cite{she75,stu75}, respectively Muchnik and Walukiewicz (see the 
announcement in \cite{sem84} and the proof in \cite{wal02}), the MSO-theory 
of $\Mweak$ and the MSO-theory of $\Mstrong$ are decidable if the 
MSO-theory of $\m$ is. In the present paper we show the decidability of the 
chain logic theory of structures $\MweakE$, 
obtained by adjoining the equal level relation $E$ to $\Mweak$, 
under mild assumptions on the structure $\m$.   
Our results extend work of Kuske and Lohrey \cite{kl06} on structures $\Mweak$
and of B{\`e}s \cite{bes08} on structures $\MweakE$. Furthermore, we show -- in contrast 
to the Muchnik-Walukiewicz result for MSO-logic --  that a transfer 
of this decidability result to tree structures $\MstrongE$ is not possible.

B{\`e}s shows the decidability of 
the chain logic theory of $\MweakE$ if the first-order theory of $\m$ is
decidable. Here we refine his result: We refer to any logic $\el$ 
such that the $\el$-theory of $\m$ is decidable, and we consider an 
{\em extension} of the chain theory of $\Mweak$ in which further quantifications 
are allowed, namely quantifiers of $\el$ restricted to the set of siblings of 
any element $z$. (Thus one allows quantifiers over elements $y$ that are $S$-successors 
of any given element $z$.) We call the corresponding theory the {\em chain logic theory 
of $\MweakE$ with $\el$ on siblings}. We show that this theory is decidable 
if the $\el$-theory of $\m$ is. 

In our framework two logics play together: The logic $\el$ 
allows to express relations between $\m$-elements as they appear as sons 
of some given node of the tree, and chain logic is used to speak about (sets of) 
tree elements arranged along paths. Referring to the standard graphical representation
of trees, $\el$ captures the horizontal dimension and chain logic the vertical 
dimension. On the level of signatures, the predicate $E$ of the tree signature
refers to the horizontal while the successor and the prefix relation refer to the vertical aspect; 
finally,  the signature of $\m$ enters in the horizontal dimension, restricted to the 
children of a tree node. 

Standard examples of logics $\el$ are first-order logic FO, monadic second-order logic MSO 
and its weak fragment WMSO, transitive closure logic TC, or extensions of FO by counting 
operators. (In this paper we do not present a precise definition of the concept of a ``logic'' and just refer the reader to~\cite{eft07}.) Standard examples of models $\m$ originate in arithmetic and analysis, e.g.\   
$(\mathbb{N}, +, <, 0), (\mathbb{R}, +, < , 0, 1), (\mathbb{R}, +, \cdot, <, 0,1)$ (whose
first-order theory is decidable). In 
applications, one may work with structures $\m$ that are 
direct products of finite transition graphs with infinite value structures such as 
 $ (\mathbb{R}, +, < , 0, 1)$ or the real field $(\mathbb{R}, +, \cdot, <, 0,1$). 

The method to show the main result rests on a simple observation, first exploited 
in \cite{tho90}: Consider the tree with domain $M^*$ where $M$ is ordered of order type $\omega$. 
A formula $\phi(X_1, \ldots, X_n)$ of chain logic -- with chains $c_i$ 
as possible interpretations of the $X_i$ -- can be viewed as a statement about 
$2n$-tuples of $\omega$-words as follows. 
Any single chain $c_i$ is encoded by two $\omega$-words; the 
first is from $M^\omega$ and describes the (leftmost) full path of which
$c_i$ is a subset. The second is a 0-1-sequence describing by its entries 0 and 1 
which elements of the path belong to $c_i$. Now the obtained  $2n$-tuple  of $\omega$-words 
over $M$, respectively $\{0,1\}$, can be viewed as a single $\omega$-word 
with alphabet letters from $(M \times \{0,1\})^n$. Using this translation 
of $n$-tuples of chains of $\m^*$ into $\omega$-words over $(M \times \{0,1\})^n$, 
we obtain a translation of 
chain logic formulas into MSO-formulas interpreted in $\omega$-words, i.e., 
structures with domain $\mathbb{N}$. More precisely, when $\el$ is the logic 
used for $\m$, we obtain a formula of ``\ml-MSO''. 

This framework of \ml-MSO is in turn equivalent to B\"uchi automata (over 
$\omega$-words with entries from $(M \times \{0,1\})^n$). We develop these 
\ml-B\"uchi automata as a preparation for the main result. 
 It turns out that these automata allow 
closure and decidability results in precise analogy to the classical 
theory over finite alphabets. As a consequence we obtain that the chain theory of $\MweakE$ with $\el$ on siblings is decidable if the $\el$-theory of $\m$ is. 

While the setting of \ml-B\"uchi automata 
is sufficient for the study of tree models $\MweakE$, it has to be extended 
to cope with strong tree iterations $\MstrongE$ where the clone predicate
enters. We define ``strong \ml-B\"uchi automata'' for this purpose. 
Here a remarkable difference occurs between the cases of 
an input alphabet $M$ (with infinite $M$) and an input alphabet $M^n$ 
for $n>1$. We give a brief explanation that in the first case 
strong B\"uchi automata behave as \ml-B\"uchi automata (however
using just $\el =$ MSO), whereas in the second case of input alphabets $M^n$ with $n>1$, 
 undecidability phenomena enter (in the form that the emptiness 
 problem becomes undecidable). Along this line we show that the chain theory (and even the first-order theory) of $\MstrongE$ is undecidable if $\m$ is infinite -- in fact already for the case that $\m$ is the successor structure of the natural numbers.
  
A last result of the paper shows that the decidability result 
(on the chain theory of $\MweakE$ with $\el$ on siblings) 
also fails when quantification extends over an entire tree level 
rather than just siblings of a fixed node. We obtain this 
for the weak tree iteration of the two element alphabet $\{0,1\}$ 
when the logic $\el$ is MSO.   

The paper is structured as follows. In the subsequent section we 
collect the necessary terminology. Section~\ref{sec_mlaut} develops the theory 
of B\"uchi automata over $\omega$-words whose letters are 
$n$-tuples from an infinite structure $\m$ and using a logic $\el$ to specify 
properties of such letters in $\m$. 
In Section~\ref{sec_weaktree} we deduce the decidability 
of the chain theory of $\MweakE$ with $\el$ on siblings when the 
$\el$-theory of $\m$ is decidable. Section~\ref{sec_equlev} gives the two mentioned
undecidability results. We conclude with remarks on further work. 


\section{Terminology}\label{sec_term}


We consider relational structures with finite signature. Such a structure 
is presented in the format $\m = (M, R_1, \ldots, R_k)$ where $R_i$ is of arity $r_i > 0$. 
We focus on structures called ``admissible'': In this case there are 
two designated elements (usually called 0 and 1), represented by 
two singleton predicates $P_0, P_1$ that belong to the tuple $(R_1, \ldots, R_k)$.
Then we can view bit sequences as special sequences over $\m$. 

For an $\omega$-word $\alpha \in \Sigma^\omega$ (where $\Sigma$ may be infinite), 
written $\alpha = \alpha(0) \alpha(1) \ldots$,  
we denote by $\alpha[i,j]$ the segment $\alpha(i) \ldots \alpha(j)$.

We introduce two tree models built from a relational structure $\m$. 
The first is the weak tree iteration 
$$\Mweak = (M^*,\preceq, S, {R^*_1},\ldots, {R^*_k})$$ 
where 
$u \preceq v :\Leftrightarrow$ $u$ is a prefix of $v$, $S$ is the successor relation 
containing all pairs $(u, um)$ with $u \in M^*, m \in M$, and  
for every $R_i$, say of arity $\ell$, we have ${R^*_i}(v_1,\dots,v_\ell)$ iff 
there exists $z \in M^*$, $m_1, \ldots m_\ell \in M$ such that $v_j = z m_j$
for $j = 1, \ldots m_\ell$ and $R_i(m_1, \ldots, m_\ell)$. (In~\cite{bes08}
a variant of this definition is used, namely that there exist
$z_1, \ldots, z_\ell \in M^*$ of same length and $m_1, \ldots, m_\ell \in M$ 
such $v_j = z_j m_j$ with $R_i(m_1, \ldots, m_\ell)$.) 

As mentioned in the introduction, the strong tree iteration of $\m$ is the structure 
$$\Mstrong = (M^*,\preceq, S, {R^*_1},\ldots, {R^*_k}, C)$$
where everything is as above for $\Mweak$ and $C = \{u \ m \ m \mid u \in M^* , m \in M\}$. 
The expansions of $\Mweak$, $\Mstrong$ by the equal level relation $E$ (with $E(u,v)$ iff 
$|u| = |v|$) are denoted $\MweakE$, $\MstrongE$, respectively. 

If $\m$ is finite, we assume that each individual letter of $M$ is definable.
The usual approach is to introduce a constant in the signature of $\m$ for each 
element of $M$. In the present paper we stick to relational structures and 
use a singleton predicate $R_m$ for each element $m \in M$. So the binary 
alphabet $\{0,1\}$ is coded by the structure $\m_2 = (\{0,1\}, R_0, R_1)$ with 
$R_0 = \{0\}$, $R_1 = \{1\}$. In the case of finite structures $\m$ 
there is no essential difference between 
$\Mweak$ and $\Mstrong$, since the clone predicate $C$ becomes definable in $\m^\#$ by 
the equivalence 
$$C(v) \leftrightarrow \bigvee_{m \in M} (\exists u(R_m^*(u) \wedge S(u,v) \wedge R_m^*(v)).$$

Let us introduce chain logic over the tree structures $\Mweak$ and $\Mstrong$ 
built from $\m$.
A path (through the tree domain  
$M^*$) is a maximal set linearly ordered by $\preceq$; it may be identified with 
an $\omega$-word in $M^\omega$, obtained as the common extension of all 
the words $u \in M^*$ forming the path. A chain is a subset of a path. 
So a singleton set in $M^*$ is a chain, and we can easily simulate first-order quantification 
by quantification over chains restricted to singletons. We call chain logic the fragment 
of MSO logic in which set quantification is restricted to chains. 

Sometimes it is convenient to eliminate first-order variables and quantifiers in 
terms of (singleton) chain quantifiers. This simplifies the setting since only one 
kind $X_1, X_2, \ldots$ of variables remains, ranging over chains. 
In order to simulate first-order logic, the signature 
of tree models has to be adapted. As atomic formulas  one uses
\begin{itemize}
\item $\Sing(X)$ for ``$X$ is a singleton''
\item $X_i \subseteq X_j$ with its standard meaning,
\item $\Succ(X_i,X_j)$ for ``$X_i$ is a singleton $\{x_i\}$, $X_j$ is a singleton $\{x_j\}$, with $S(x_i,x_j)$;
similarly for $X_i \preceq X_j$. 
\end{itemize}
The resulting formalism is called chain$_0$ logic; it has the same expressive 
power as chain logic. 

 For an 
admissible alphabet $M$ (containing two identifiable elements 0,1) we 
encode a chain $c$ as a pair $\hat{c}:=(\alpha,\beta) \in (M^\omega)^2$ where 
  \begin{itemize}
    \item $\alpha$ encodes the path of which $c$ is a subset. As $c$ can be finite, we set $\alpha$ to be the path $m_0\ldots m_r 000\ldots$ where $m_r$ is the last $c$-element of which $c$ is a subset; it can be interpreted as a sequence of ``directions''. Note that for each element $w$ in $c$ it holds that $w$ is a prefix of $\alpha$.
    \item $\beta$ codes membership in $c$ along the path $\alpha$, i.e., $\beta(i)=1$ iff $\alpha[0,i] \in c$.
  \end{itemize}

So if $c = \emptyset$, $\alpha$ is the path $0^\omega$ through the tree $M^*$ and $\beta$ also is 
the sequence that is constant $0$. 

The technical treatment below is simplified when viewing an $n$-tuple $(\alpha_1, \ldots, \alpha_n)$ 
of $\omega$-words 
over $M$ as a single $\omega$-word over $M^n$, the \emph{convolution} of 
$(\alpha_1,\dots,\alpha_n)$:  
$$ \langle \alpha_1, \ldots, \alpha_n \rangle := \begin{bmatrix}  \alpha_1(0)\\ \vdots\\ \alpha_n(0) \end{bmatrix} 
\begin{bmatrix} \alpha_1(1)\\ \vdots\\ \alpha_n(1)\end{bmatrix} \cdots \quad \in (M^n)^{\omega}
$$
Similarly, we define the \emph{convolution of a relation $R \subseteq (M^{\omega})^n$ of 
$\omega$-words} to be the $\omega$-language 
$$L_R := \{ \langle \alpha_1, \ldots, \alpha_n \rangle  \mid 
(\alpha_1, \ldots, \alpha_n) \in R\}. $$

So the $n$-tuples of $M$-elements just considered will be used as 
letters of $\omega$-words and input letters of B\"uchi automata. Transitions 
of automata will be specified in a logic $\el$ by means of 
$\el$-formulas $\phi(x_1, \ldots, x_n)$. Each of these 
formulas defines a unary predicate $\phi^\m$ over $M^n$:  
$$\phi^\m = \{(m_1, \ldots, m_n) \in M^n \mid \m \models \phi[m_1, \ldots, m_n]\}$$

In general we consider $\omega$-models over $M^n$ for a signature that is given by a 
finite set $\Phi$ of $\el$-formulas: 
Given a tuple $(\alpha_1,\ldots,\alpha_n)$ of words over an alphabet $M$ and 
a finite set $\Phi$ of $\el$-formulas $\phi_1, \ldots , \phi_k$ with $n$ free 
variables each, we define the structure 
 $$\underline{\langle \alpha_1, \ldots, \alpha_n \rangle} = 
 (\mathbb{N}, 0, <, S, (P_{\phi})_{\phi \in \Phi})$$ 
with the usual interpretations of $0, <, S$ (the latter for the successor relation), 
and the letter predicates 
$P_{\phi_j} = \{i \in \mathbb{N} \mid (\alpha_1(i),\ldots,\alpha_n(i)) \in \phi_j^\m\}$. 
 Thus,  $P_\phi$ collects all letter positions of $\langle \alpha_1, \ldots, \alpha_n \rangle$ 
 which carry a letter from $M^n$ that shares the property described by $\phi$.

For these $\omega$-models over $\m$, equipped with predicates $P_\phi$ defined 
in $\el$, we shall use a generalized form of MSO-logic, where -- as usual in $\omega$-language theory --
the first-order quantifiers range over $\mathbb{N}$ and the monadic second-order
quantifiers over sequences of letters (here from $M$). The system will be called \ml-MSO. 

For an \ml-MSO-sentence $\psi$, where the predicates $P_\phi$ are introduced via 
$\el$-formulas $\phi(x_1, \ldots, x_n)$ with $n$ free variables, we set 
$$L(\psi) = \{\langle \alpha_1, \ldots, \alpha_n \rangle  \in (M^n)^\omega 
\mid \underline{\langle \alpha_1, \ldots, \alpha_n \rangle} \models \psi\}$$ 
as the $\omega$-language defined by $\psi$. 
We say a relation $R \subseteq (M^\omega)^n$ is \emph{\ml-MSO definable} 
if there is a \ml-MSO sentence $\psi$ with $L_R = L(\psi)$.

Later on, it will be convenient to refer to the component entries of an $\omega$-word $\langle \alpha_1, \ldots, \alpha_n \rangle$ in a more readable way than via an index $i \in \{1,\ldots,n\}$. So, when a sequence variable $Y$ is used for the $i$-th component $\alpha_i$, we shall write $Y(s)$ to indicate the element $\alpha_i(s)$ for $s \in \mathbb{N}$.

Analogous definitions can be given for the case of finite words over $M^n$. 


\section{\ml-B{\"u}chi Automata}\label{sec_mlaut}


In this section we introduce finite automata over words and $\omega$-words 
whose letters are $n$-tuples from $M$ which is the  
domain of a (in general infinite) relational structure $\m$. 
Transitions of the automata are defined in 
a logic $\el$. Mentioning both parameters (the structure $\m$ and the logic $\el$), 
we speak of \ml-automata and \ml-B\"uchi automata. In the first subsection we obtain, 
not surprisingly, an 
equivalence between \ml-automata and \ml-MSO. In the second subsection we add some 
remarks  on an extended model (``strong B\"uchi automata'') that allows to capture 
the clone predicate between successive letters. 


\subsection{The standard case}


Let $\m$ be a structure with domain $M$. 
An \emph{\ml-B{\"u}chi automaton} over $n$-tuples of $M$-elements is of the form 
$$\mathcal{B} = (Q, M^n, q_0, \Delta, F)$$ 
with a finite set $Q$ of states, the input alphabet $M^n$,  
the initial state $q_0 \in Q$, the set $F \subseteq Q$ of accepting states and 
the finite transition relation 
$\Delta \subseteq Q \times \Phi_n \times Q$, where $\Phi_n$ is the set of $\el$-formulas with $n$ free variables. 

Let us define acceptance of $\omega$-words. If 
$\alpha = \langle \alpha_1, \ldots, \alpha_n \rangle$ is an $\omega$-word over $M^n$,  
a \emph{run} of $\mathcal{B}$ on $\alpha$ is an infinite sequence of states 
$\rho = \rho(0)\rho(1) \dots$ with $\rho(0)=q_0$ such that for every $i \geq 0$ there 
exists an \ml-formula $\phi(x_1,\dots,x_n)$ and a transition $(\rho(i),\phi,\rho(i+1))$ satisfying
$$\m \models \phi[\alpha_1(i),\ldots,\alpha_n(i)]$$
A run $\rho$ of $\mathcal{B}$ on $\alpha$ is \emph{successful} 
if there exist infinitely many $i$ such that $\rho(i) \in F$. We say that $\mathcal{B}$ \emph{accepts} $\alpha$ if there exists a successful run of $\mathcal{B}$ on $\alpha$. 
We denote by $L(\mathcal{B})$ the set of $\omega$-words over $M^n$ accepted by $\mathcal{B}$.

Similarly, we define \ml-automata for the case of finite words (as done in~\cite{bes08}). 
Languages accepted by these automata will be denoted as \ml-recognizable languages. 
We note some basic properties. 
\begin{lemma}\label{lem:mlprops}\ 
\begin{itemize}
  \item The class of \ml-recognizable languages (of finite words) is closed under union, projection, and complementation.
  \item For an \ml-recognizable language (of finite words) $U \subseteq (M^n)^*$ and an \ml-B{\"u}chi recognizable $\omega$-language $K \subseteq (M^n)^{\omega}$, we have 
    \begin{enumerate}
    \item $U^{\omega}$ is \ml-B{\"u}chi recognizable.
    \item $U \cdot K$ is \ml-B{\"u}chi recognizable.
  \end{enumerate}
\end{itemize}
\end{lemma}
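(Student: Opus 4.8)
The plan is to mimic the classical constructions for Büchi automata over finite alphabets, checking at each step that the side conditions involving the logic $\el$ can still be handled by $\el$-formulas, so that the only new ingredient is the bookkeeping of formulas labelling transitions rather than concrete letters. Throughout, recall that a transition of an \ml-automaton carries an $\el$-formula $\phi(x_1,\dots,x_n)$, and that a letter $(m_1,\dots,m_n)\in M^n$ ``fits'' the transition iff $\m\models\phi[m_1,\dots,m_n]$; the key closure property we lean on is that $\el$-formulas are closed under Boolean combinations, so that conjunctions such as $\phi\wedge\psi$ and negations $\neg\phi$ are again admissible transition labels. (For the finite-word automata, acceptance is by a designated final-state set in the usual way, as in~\cite{bes08}.)

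First I would treat the three closure properties of \ml-recognizable languages of finite words. \emph{Union} is the standard disjoint union of the two automata with a fresh initial state (or, working with NFAs, simply taking the union of state sets, transitions and initial/final sets after renaming); no formula manipulation is needed. \emph{Projection} from $(M^{n})^{*}$ to $(M^{n-1})^{*}$ along one coordinate, say the $n$-th: replace each transition label $\phi(x_1,\dots,x_n)$ by $\exists x_n\,\phi(x_1,\dots,x_n)$; this is again an $\el$-formula with $n-1$ free variables, and a run in the projected automaton exists iff some choice of $n$-th components makes the original run valid — exactly the definition of projection of the accepted language. \emph{Complementation} is the only genuinely delicate point. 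I would first determinize: apply the subset construction, but before doing so make the transition set ``deterministic-ready'' by replacing, for each state $q$ and each subset $T$ of the outgoing-formula multiset at $q$, the label by the conjunction $\bigwedge_{\phi\in T}\phi\wedge\bigwedge_{\phi\notin T}\neg\phi$; these refined labels partition $M^{n}$ into $\el$-definable classes, so the resulting automaton reads each letter deterministically into one successor-set. The subset construction then yields an equivalent complete deterministic \ml-automaton, and complementation is swapping final and non-final states. The main obstacle is precisely ensuring that this ``letter refinement'' stays inside $\el$ and produces only finitely many transitions: finiteness is fine because $\Delta$ is finite so only finitely many formulas appear, and closure under finite Boolean combinations is guaranteed since $\el$ is a logic in the sense of~\cite{eft07}; one should remark that the truth of each refined label in $\m$ need not be decided — we only need it as a symbolic label.

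Next I would handle the two $\omega$-language statements. For $U^{\omega}$: take an \ml-automaton $\mathcal{A}=(Q,M^n,q_0,\Delta,F)$ with $L(\mathcal{A})=U$; without loss of generality $q_0$ has no incoming transitions and every accepting run is nonempty (standard normalization, reading off a nonempty word). Build the Büchi automaton that, whenever $\mathcal{A}$ would enter a final state, instead (nondeterministically) jumps back to $q_0$; formally add, for each transition $(p,\phi,q)$ with $q\in F$, the transition $(p,\phi,q_0)$, and declare $\{q_0\}$ the single Büchi-accepting set (after the usual trick of duplicating $q_0$ so that the very first entry does not falsely trigger acceptance). An $\omega$-word lies in the language of this automaton iff it factors as an infinite concatenation of nonempty words each accepted by $\mathcal{A}$, i.e.\ iff it lies in $U^{\omega}$. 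For $U\cdot K$: given the finite-word \ml-automaton for $U$ and an \ml-Büchi automaton $\mathcal{B}=(Q',M^n,q_0',\Delta',F')$ for $K$, form the sequential composition — take the disjoint union, keep $q_0$ of the $U$-automaton as initial, add an $\varepsilon$-type link (or, to stay label-driven, splice transitions) from every final state of the $U$-automaton into $q_0'$, and keep $F'$ as the Büchi set; to avoid $\varepsilon$-transitions one redirects each transition $(p,\phi,q)$ of the $U$-automaton with $q$ final into $(p,\phi,q_0')$, and also allows $\mathcal{B}$ to start directly if the empty prefix is in $U$. A successful run then decomposes uniquely into a finite accepting prefix in $U$ followed by a successful \ml-Büchi run in $K$, giving $L=U\cdot K$. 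None of these $\omega$-constructions touches the transition labels, so the only place where the logic $\el$ is really exercised is the complementation argument above; everything else is the classical automata-theoretic plumbing carried out with formula-labelled transitions in place of letter-labelled ones.
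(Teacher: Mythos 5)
Your proposal is correct and follows essentially the same route as the paper: projection by existentially quantifying the transition formulas, complementation by refining the transition labels into the Boolean "atoms" $\bigwedge_{i\in J}\phi_i\wedge\bigwedge_{i\notin J}\neg\phi_i$ before applying the powerset construction and swapping final states, and the standard loop-back/splicing constructions for $U^\omega$ and $U\cdot K$. The only (immaterial) differences are that you refine labels per state rather than globally and you explicitly note that the refined labels need only be symbolic, which is a fair observation since this lemma does not assume decidability of the $\el$-theory.
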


\begin{proof}
The closure properties of \ml-recognizable languages (of finite words) are shown by slight 
adaptions of the classical case (where the alphabet is finite). 
Here, we concentrate on pointing out the adaptions rather than the actual constructions. 
For example, an automaton for the projection from $M^n$ to $M^{n-1}$ can easily be 
obtained by replacing the ``label'' 
$\phi(x_1,\ldots,x_n)$ of a transition by $\exists x_n \phi(x_1,\ldots, x_n)$. 
For the complementation, we follow the strategy of a determinization via a powerset construction and then simply swapping the sets $F$ and $Q \setminus F$ (as 
outlined in~\cite{bes08}). The idea is as follows\eat{ (for a full work-out see~\cite{win11})}: 
Given an \ml\ automaton $\mathcal{B}$ (on finite words), $\mathcal{B}$ does not 
necessarily provide a run (accepting or not accepting) for every possible 
input letter in $M^n$, i.e., there might be a letter that does not satisfy 
any of the formulas of the transitions. For the construction of the complement automaton, 
one modifies the set of formulas for the transitions such that each input word
leads to a complete run, and additionally, one prepares for determinism: 
Let $\phi_1,\ldots,\phi_m$ be the formulas which occur in the transitions 
of $\mathcal{B}$. For each subset $J \subseteq \{1,\ldots,m\}$, introduce 
the formula $\psi_J := \bigwedge_{i \in J}{\phi_i} \wedge \bigwedge_{i \notin J}{\neg \phi_i}$. 
Note that for $J \neq K$, there is no symbol $\overline{m} \in M^n$ 
with $\m \models \psi_J \wedge \psi_K [\overline{m}]$, 
 and for each $\overline{m}$, there is a set $J$ such that $\m  \models \psi_J [\overline{m}]$. 
 Then we construct $\mathcal{B}'$ by replacing each transition 
 $(p,\phi_i,q) \in \Delta$ by $(p,\Psi_i,q)$ with $\Psi_i = \bigvee_{J \ni i} \psi_J$. 
Then $L(\mathcal{B}')=L(\mathcal{B})$, and one can continue with the usual 
powerset construction. 

Concerning the second part of the Lemma, 
for a given \ml-recognizable $U \subseteq (M^n)^*$, 
the construction of an \ml-B{\"u}chi automaton recognizing $U^{\omega}$ can be done 
in a straightforward way by isolating the initial state such that it has no incoming 
transitions and for each transition from a state $q$ to some state in $F$, adding a transition from $q$ to the initial state over the same letter, where the initial state will be the only final state in the new automaton. 
For the concatenation $U \cdot K$, we again follow a well-known idea by composing 
the two automata with additional transitions to cross over from one to the other 
at the appropriate positions.
\end{proof}

The basic decidability result on \ml-automata is the following. We state it 
for both kinds of automata: 

\begin{proposition}\label{prop:emptinessbuechi}
If the $\el$-theory of $\m$ is decidable, then the nonemptiness problem 
for \ml-automata on finite words as well as for \ml-B{\"u}chi automata is decidable.
\end{proposition}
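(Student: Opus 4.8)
The plan is to reduce nonemptiness to a finite graph--reachability problem. The key observation is that in a run of an \ml-B\"uchi automaton the input letters at distinct positions may be chosen completely independently: the only demand made by a transition $(p,\phi,q) \in \Delta$ when it is used at a position $i$ is that the letter $\alpha(i) \in M^n$ at that position satisfy $\phi$, and nothing links the letters at positions $i$ and $i+1$. (This is precisely the feature that will be lost once the clone predicate is added in the strong B\"uchi automata of the next subsection.) Hence a transition $(p,\phi,q)$ can occur in some run of $\mathcal{B}$ at all if and only if $\phi^\m \neq \emptyset$, i.e.\ if and only if $\m \models \exists x_1 \cdots \exists x_n\, \phi(x_1,\ldots,x_n)$.

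First I would, for each of the finitely many transitions $(p,\phi,q) \in \Delta$, decide whether $\m \models \exists x_1 \cdots \exists x_n\, \phi$. This is a single $\el$-sentence -- here one uses the (mild) closure of $\el$ under first-order existential quantification, which holds for all the logics $\el$ considered in the paper -- so by the assumed decidability of the $\el$-theory of $\m$ each such test terminates, and since $\Delta$ is finite the whole preprocessing does. Call a transition \emph{satisfiable} if the test succeeds, and let $G_\mathcal{B}=(Q,E)$ be the finite directed graph with $(p,q)\in E$ iff $(p,\phi,q)$ is a satisfiable transition for some $\phi$. Then, in the finite-word model, $L(\mathcal{B})\neq\emptyset$ iff some state of $F$ is reachable from $q_0$ in $G_\mathcal{B}$; and in the B\"uchi model, $L(\mathcal{B})\neq\emptyset$ iff some state $q\in F$ is both reachable from $q_0$ in $G_\mathcal{B}$ and lies on a cycle of $G_\mathcal{B}$. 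The ``only if'' directions are immediate from a given accepting run. For ``if'', take a path $q_0=\rho(0),\rho(1),\ldots$ through $G_\mathcal{B}$ witnessing the reachability condition (in the $\omega$-case the ultimately periodic path that first reaches $q$ and then loops forever through the cycle at $q$); choosing for each step a satisfiable transition $(\rho(i),\phi_i,\rho(i+1))$ and a witness $\overline{m}_i\in\phi_i^\m$ and setting $\alpha(i):=\overline{m}_i$ yields a word (resp.\ $\omega$-word over $M^n$) on which $\rho$ is an accepting run. Both graph conditions are decidable by standard algorithms (reachability, plus strongly-connected-component analysis for the cycle condition), so the preprocessing above turns nonemptiness into a decidable question.

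The only step that needs genuine care is the independence-of-letters observation together with the resulting equivalence ``transition usable $\iff$ $\phi^\m\neq\emptyset$'': this is what makes the problem finitary at all, and, as noted, it is precisely what fails for the strong B\"uchi automata treated afterwards. Everything else is a routine adaptation of the classical B\"uchi nonemptiness argument over finite alphabets.
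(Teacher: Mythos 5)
Your proof is correct and follows essentially the same route as the paper: test each transition $(p,\phi,q)$ for satisfiability via $\m \models \exists x_1 \ldots \exists x_n\,\phi$, then reduce to reachability (resp.\ reachability of an accepting state on a cycle / in a strongly connected component) in the resulting finite graph. Your explicit remarks on the independence of letters across positions and on $\el$ being closed under existential quantification are just careful spellings-out of what the paper leaves implicit.
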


\begin{proof}
For both kinds of \ml-automata, we have to determine whether there exists 
a word which is the label of a finite successful run. As a preparation, 
we have to check for each of the finitely many transitions $(p,\phi(x_1, \ldots ,x_n),q) \in \Delta$ 
whether it is ``useful'', i.e., whether 
there is an input letter $\overline{m} \in M^n$ satisfying 
$\phi$. This is done by invoking decidability of the $\el$-theory of $\m$, 
namely by checking whether $\m \models \exists x_1 \ldots \exists x_n \phi(x_1, \ldots, x_n)$. 
Now one considers the directed graph $(Q,R)$ where $(p,q) \in R$ if there is a useful
transition from $p$ to $q$. For an \ml-automaton over finite words, it remains to check whether 
in $(Q,R)$ there is a path from $q_0$ to $F$; for an \ml-B\"uchi automaton one verifies whether
in $(Q,R)$ there is a path from $q_0$ to a strongly connected component containing a state from $F$.
\end{proof}

We now show basic closure properties of \ml-B\"uchi automata.

\begin{lemma}\label{lem:mlbuechiclosure}
If the $\el$-theory of $\m$ is decidable, 
the class of \ml-B{\"u}chi-recognizable $\omega$-languages is effectively closed 
under union, projection, and complementation.
\end{lemma}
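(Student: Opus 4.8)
The plan is to follow the classical Büchi-automata development closely, transferring each construction to the \ml-setting while checking that the only new ingredient --- reasoning about the \el-formulas labelling transitions --- can always be discharged using decidability of the \el-theory of \m. For \emph{union}, I would simply take the disjoint union of two \ml-B\"uchi automata with a fresh initial state that nondeterministically enters either copy (or, even more directly, argue via \msos1s-style closure); no reasoning about \m is needed here. For \emph{projection} from $M^n$ to $M^{n-1}$, I would reuse the trick already mentioned in the proof of Lemma~\ref{lem:mlprops}: replace each transition label $\phi(x_1,\dots,x_n)$ by $\exists x_n\,\phi(x_1,\dots,x_n)$, which is again an \el-formula (with $n-1$ free variables), and observe that a run of the new automaton on $\langle\alpha_1,\dots,\alpha_{n-1}\rangle$ corresponds exactly to a run of the old one on some extension $\langle\alpha_1,\dots,\alpha_{n-1},\alpha_n\rangle$.

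The substantial case is \emph{complementation}, and this is where I expect the main obstacle. The plan is to mimic the two-step classical route: first \emph{determinize} (or at least make the automaton amenable to complementation), then complement. The key observation --- already isolated in Lemma~\ref{lem:mlprops} for finite words --- is that the finitely many transition formulas $\phi_1,\dots,\phi_m$ occurring in $\mathcal B$ induce a finite partition of $M^n$ into the ``\el-types'' $\psi_J := \bigwedge_{i\in J}\phi_i \wedge \bigwedge_{i\notin J}\neg\phi_i$ for $J\subseteq\{1,\dots,m\}$; each $\psi_J$ is itself an \el-formula, the $\psi_J$ are pairwise disjoint, and they cover $M^n$. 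Replacing labels accordingly turns $\mathcal B$ into an equivalent automaton whose transitions are labelled by the $\psi_J$, i.e.\ effectively a B\"uchi automaton over the \emph{finite} alphabet $\{\psi_J : J \subseteq \{1,\dots,m\}\}$ --- except that some letters $\psi_J$ may denote the empty subset of $M^n$. Here is exactly the place decidability of the \el-theory of \m\ enters: for each $J$ we decide whether $\m \models \exists x_1\cdots\exists x_n\,\psi_J$, and discard the ``unrealizable'' letters. On the resulting genuine finite-alphabet B\"uchi automaton we apply the classical complementation construction (Safra, or the Büchi/McNaughton route), obtaining a B\"uchi automaton $\mathcal C$ over that finite alphabet; finally we reinterpret each finite-alphabet letter $\psi_J$ back as the \el-formula $\psi_J$, yielding an \ml-B\"uchi automaton recognizing the complement within $(M^n)^\omega$.

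The point to be careful about --- the real content of the proof --- is that this ``finite-alphabet reduction'' is faithful for \emph{$\omega$-words}, not just finite words: an $\omega$-word $\langle\alpha_1,\dots,\alpha_n\rangle \in (M^n)^\omega$ maps to the $\omega$-word over $\{\psi_J\}$ whose $i$-th letter is the unique $\psi_J$ with $(\alpha_1(i),\dots,\alpha_n(i))\in\psi_J^{\m}$, and this map is a surjection from $(M^n)^\omega$ onto $(\{\psi_J : \psi_J^\m \neq \emptyset\})^\omega$ that preserves and reflects acceptance by the automata in question. Once that correspondence is set up, complementation over the infinite alphabet $M^n$ is literally complementation over the finite abstract alphabet, and effectiveness follows because the only non-syntactic step --- deciding which $\psi_J$ are realizable in \m\ --- is exactly an \el-theory query. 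I would then remark that the same abstraction immediately re-derives the finite-word closure properties of Lemma~\ref{lem:mlprops} and will be the workhorse for the equivalence with \ml-MSO and for the decidability results of Section~\ref{sec_weaktree}.
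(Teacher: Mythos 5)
Your proposal is correct, and for union and projection it coincides with the paper's proof; but your complementation argument takes a genuinely different route. The paper redoes B\"uchi's original Ramsey-based argument \emph{inside} the \ml-setting: it colours finite segments by transition profiles, observes that each profile class is a Boolean combination of the \ml-recognizable languages $U_{pq}$, $U'_{pq}$, proves the saturation property for the sets $U_{\tau_0}\cdot U_\tau^\omega$, and presents the complement explicitly as a finite union of such sets (using decidability of the \el-theory to compute the nonempty classes and the ``negative'' profile pairs). You instead factor the infinite alphabet $M^n$ through the finite set of \el-types $\psi_J$ determined by the transition formulas, check that the induced map $(M^n)^\omega\to\{\psi_J\}^\omega$ preserves and reflects acceptance, and then invoke classical finite-alphabet complementation as a black box; this is sound (note that discarding the unrealizable $\psi_J$ is the only semantic step, and is in fact optional for correctness, since words containing an unrealizable letter have empty preimage). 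Your abstraction is more modular --- it re-derives Lemma~\ref{lem:mlprops} and feeds directly into Theorem~\ref{theo:maintheobuechi}, as you note --- and makes transparent that the whole standard \ml-theory is just the finite-alphabet theory relativized along the type map. What the paper's heavier Ramsey route buys is reusability where your reduction breaks down: for the \emph{strong} \ml-B\"uchi automata of Section~3.2, transitions depend on the previous concrete letter (to track the clone predicate), so a letter's unary \el-type no longer determines which transitions it enables and no finite-alphabet quotient exists; the paper complements those automata by modifying its transition profiles to carry the one-letter context, a modification that has no counterpart in your scheme. So both arguments are valid here, but only the paper's generalizes to the strong model.
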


\begin{proof}
For union and projection the same construction as in Lemma~\ref{lem:mlprops} works. 
We sketch the construction for complementation, using the original 
approach of B\"uchi \cite{buc62}. 

Let $\mathcal{B} = (Q, M^n, q_0, \Delta, F)$ be an \ml-B\"uchi automaton.
We introduce an equivalence relation over finite $M^n$-words such that 
$(M^n)^\omega \setminus L(\mathcal{B})$ is representable as a finite union of 
sets $U \cdot V^\omega$ with \ml-recognizable sets $U,V \subseteq (M^n)^*$. 
By Lemma~\ref{lem:mlprops}, this suffices to show B\"uchi recognizability of $(M^n)^\omega \setminus L(\mathcal{B})$.

The desired equivalence relation is defined in terms of 
\emph{transition profiles}. We write for a finite word $u\in (M^n)^*$ and $p,q\in Q$: 
\begin{itemize}
 \item $\mathcal{B}: p \xrightarrow{u} q$ if there is a run on $u$ from $p$ to $q$ in $\mathcal{B}$,
 \item $\mathcal{B}: p \xrightarrow[F]{u} q$ if there is a run on $u$ from $p$ to $q$ in $\mathcal{B}$ that visits an accepting state from $F$.
\end{itemize}

A transition profile $\tau=\tp(u)$ is then given by two sets $I_{\tp(u)}$, $J_{\tp(u)}$ of pairs of states, $I_{\tp(u)}$ containing those pairs $(p,q)$ where $\mathcal{B}: p \xrightarrow{u} q$, and $J_{\tp(u)}$ containing those pairs $(p,q)$ where $\mathcal{B}: p \xrightarrow[F]{u} q$. Two words $u,v$ are called $\mathcal{B}$-equivalent, written $u \sim_\mathcal{B} v$, if $\tp(u) = \tp(v)$. This equivalence relation is of finite index: For this, note that each equivalence class (i.e., a language $U_\tau$ for a type $\tau$) is a Boolean combination of the $\ml$-recognizable languages 
$U_{pq}=\{u \mid \mathcal{B}:p \xrightarrow{u} q\}$, $U_{pq}'=\{u \mid \mathcal{B}:p \xrightarrow[F]{u} q\}$, in fact, we have
$$U_{\tau}=\bigcap_{(p,q)\in I_\tau} U_{pq} \cap \bigcap_{(p,q)\not\in I_\tau} \overline{U_{pq}} \cap \bigcap_{(p,q)\in J_\tau} U_{pq}' \cap \bigcap_{(p,q)\not\in J_\tau} \overline{U_{pq}'}.$$

Since the set of pairs $(p,q)$ is finite, we get only finitely many equivalence classes. Moreover, by Lemma~\ref{lem:mlprops} and Proposition~\ref{prop:emptinessbuechi}, we can compute those $U_\tau$ which are nonempty and hence obtain an effective presentation of the equivalence classes in terms of the corresponding finite sets $I_\tau$, $J_\tau$.

We identify the equivalence classes with the transition profiles and denote the set of these transition profiles of $\mathcal{B}$ by $\mathit{TP}_\mathcal{B}$. 
 
The following ``saturation property'' is now immediate: 

\begin{lemma}\label{saturation}
For any $\sim_\mathcal{B}$-equivalence classes $U, V$, the $\omega$-language 
$U \cdot V^\omega$ is either contained in $L(\mathcal{B})$ or in its 
complement.  
\end{lemma}

It remains to show that any $\omega$-word over $M^n$ belongs to some set 
$U \cdot V^\omega$ where $U,V$ are $\sim_\mathcal{B}$-classes. For this we 
use the transition profiles as ``colors'' of segments $\alpha[i, j]$ for $i,j \in \mathbb{N}$. 
By Ramsey's Infinity Lemma \cite{ram30} there is for any $\alpha$ and any 
B\"uchi automaton $\mathcal{B}$ a pair of transition profiles $\tau_0, \tau$ from 
$\mathit{TP}_\mathcal{B}$ and an infinite set $I = \{i_0 < i_1 < i_2 < \ldots \}$ such that 
$$ \tp(\alpha[0,i_0 -1]) = \tau_0, \ \ \tp(\alpha[i_j, i_{j+1}-1]) = \tau \text{ for }j \geq 0.$$ 
This shows that $\alpha \in U_{\tau_0} \cdot U^\omega_{\tau}$, where $U_{\tau_0}$, $U_{\tau}$ denote the equivalence classes of $\sim_{\mathcal{B}}$ corresponding to $\tau_0$ resp.\ $\tau$. 
Let  
$$\mathit{NTP}_\mathcal{B} = \{(\tau_0,\tau) \in \mathit{TP}_\mathcal{B}^2 \mid 
U_{\tau_0} \cdot U_{\tau}^{\omega} \cap L(\mathcal{B}) = \emptyset\}$$ 
Again, by decidability of the $\el$-theory of $\m$, this set is computable. 
Then 
$$(M^n)^{\omega} \setminus  L(\mathcal{B}) = \bigcup_{(\tau_0,\tau) \in \mathit{NTP}}{U_{\tau_0}U_{\tau}^{\omega}}.$$
\ 
\end{proof}

As a consequence of Lemma~\ref{lem:mlprops} and Lemma~\ref{lem:mlbuechiclosure} we obtain the following result. 

\begin{proposition}\label{prop:inclusionEquivBuechi}
If the $\el$-theory of $\m$ is decidable, the inclusion problem and the equivalence problem for 
\ml-B{\"u}chi recognizable languages are decidable. 
\end{proposition}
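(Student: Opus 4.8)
The plan is to reduce both problems to the nonemptiness problem for \ml-B\"uchi automata, which is decidable by Proposition~\ref{prop:emptinessbuechi} under the hypothesis that the $\el$-theory of $\m$ is decidable. Fix two \ml-B\"uchi automata $\mathcal{B}_1,\mathcal{B}_2$ over the same alphabet $M^n$. The key observation is the classical equivalence
$$L(\mathcal{B}_1) \subseteq L(\mathcal{B}_2) \iff L(\mathcal{B}_1) \cap \bigl((M^n)^\omega \setminus L(\mathcal{B}_2)\bigr) = \emptyset.$$
So it suffices to construct, effectively, an \ml-B\"uchi automaton recognizing the right-hand intersection and then to test it for emptiness.

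First I would invoke Lemma~\ref{lem:mlbuechiclosure} to obtain effectively an \ml-B\"uchi automaton $\overline{\mathcal{B}_2}$ with $L(\overline{\mathcal{B}_2}) = (M^n)^\omega \setminus L(\mathcal{B}_2)$; this step already uses decidability of the $\el$-theory of $\m$. Next I need closure under intersection. This is not listed explicitly in Lemma~\ref{lem:mlbuechiclosure}, but it follows in two ways: either from De Morgan's law, $L_1 \cap L_2 = (M^n)^\omega \setminus \bigl((\,(M^n)^\omega \setminus L_1) \cup (\,(M^n)^\omega \setminus L_2)\bigr)$, using the effective closure under union and complementation; or, more directly, by the standard product construction for B\"uchi automata (with the usual extra component for cycling through the two acceptance conditions). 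The only point to check in the direct construction is that the transition labels behave: a joint transition of the product reads a letter $\overline{m} \in M^n$ that must simultaneously satisfy a label $\phi$ of $\mathcal{B}_1$ and a label $\psi$ of $\overline{\mathcal{B}_2}$, so we label it by $\phi \wedge \psi$, which is again an $\el$-formula with $n$ free variables and hence a legitimate member of $\Phi_n$. Applying this to $\mathcal{B}_1$ and $\overline{\mathcal{B}_2}$ yields an \ml-B\"uchi automaton for $L(\mathcal{B}_1) \cap \bigl((M^n)^\omega \setminus L(\mathcal{B}_2)\bigr)$, and Proposition~\ref{prop:emptinessbuechi} decides whether it is empty, settling the inclusion problem.

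For the equivalence problem I would simply note that $L(\mathcal{B}_1) = L(\mathcal{B}_2)$ holds iff both $L(\mathcal{B}_1) \subseteq L(\mathcal{B}_2)$ and $L(\mathcal{B}_2) \subseteq L(\mathcal{B}_1)$, so two applications of the inclusion procedure suffice. I do not expect a genuine obstacle here: everything reduces to the closure lemmas and to emptiness. The mildly delicate point is pure bookkeeping --- making sure the intersection construction keeps all transition labels inside $\Phi_n$ (which it does, by taking conjunctions) and that every step, including the complementation of $\mathcal{B}_2$ and the emptiness tests, is genuinely effective, which is exactly what decidability of the $\el$-theory of $\m$ provides.
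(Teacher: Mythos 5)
Your proposal is correct and follows essentially the same route as the paper, which states the proposition as an immediate consequence of the closure properties (Lemma~\ref{lem:mlprops}, Lemma~\ref{lem:mlbuechiclosure}) together with the decidable emptiness test of Proposition~\ref{prop:emptinessbuechi}. You merely spell out the standard details the paper leaves implicit --- obtaining intersection via De Morgan (or a product construction with conjoined transition labels, which indeed stay inside $\Phi_n$) and reducing equivalence to two inclusion tests.
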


After these preparations, one can easily infer an equivalence between \ml-B{\"u}chi automata 
and \ml-MSO.

\begin{remark}\label{rem:AtoFBuechi}
Let $\mathcal{B}$ be an \ml-B{\"u}chi automaton, then there exists an \ml-MSO sentence $\psi$ with $L(\mathcal{B}) = L(\psi)$.
\end{remark}

Again, the construction of an \ml-MSO formula describing a successful run of a given \ml-B{\"u}chi automaton $\mathcal{B}$ is a straightforward adaption of the well-known proof (\cite{tho97}). 
The only modification occurs in the formulas describing the transitions of $\mathcal{B}$: 
for  a transition $(p, \phi, q)$, one uses the predicates $P_{\phi}(x)$ as introduced above 
in the definition of \ml-MSO.

Let us turn to the translation from \ml-MSO sentences to \ml-B{\"u}chi automata. 

\begin{proposition}\label{prop:FtoABuechi}
Let $\psi$ be an \ml-MSO sentence, then there exists an \ml-B{\"u}chi-automaton $\mathcal{B}$ with $L(\psi) = L(\mathcal{B})$.
\end{proposition}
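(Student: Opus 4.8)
The plan is to follow the classical proof of B\"uchi's theorem (cf.~\cite{tho97}) by induction on the structure of $\psi$, the only genuinely new ingredient being the treatment of the letter predicates $P_\phi$. First I would put $\psi$ into the usual normal form with a single kind of (second-order) variable: first-order variables are simulated by singletons, and the atomic formulas are of the bounded forms ``$X\subseteq Y$'', ``$S(X,Y)$'', ``$X<Y$'', ``$\Sing(X)$'' and ``$X\subseteq P_\phi$'' with $\phi$ ranging over the fixed finite set $\Phi$ of $\el$-formulas occurring in $\psi$. Fixing an enumeration $X_1,\dots,X_k$ of the free variables of a subformula $\chi$, I would build an \ml-B\"uchi automaton $\mathcal{B}_\chi$ over the alphabet $M^n\times\{0,1\}^k$ --- which, since $\m$ is admissible and hence $\{0,1\}\subseteq M$, is (a definable subalphabet of) $M^{n+k}$, so that $\mathcal{B}_\chi$ is a bona fide \ml-B\"uchi automaton in the sense of this section --- such that $\mathcal{B}_\chi$ accepts exactly those convolutions $\langle\alpha_1,\dots,\alpha_n,W_1,\dots,W_k\rangle$ for which $\underline{\langle\alpha_1,\dots,\alpha_n\rangle}$ satisfies $\chi$ when $X_i$ is interpreted by $W_i$. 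Taking $k=0$ at the top level yields the automaton $\mathcal{B}$ for the sentence $\psi$.

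For the atomic formulas not mentioning $\Phi$ (the forms ``$X_i\subseteq X_j$'', ``$S(X_i,X_j)$'', ``$X_i<X_j$'', ``$\Sing(X_i)$'') the automaton is a small finite automaton that inspects only the $\{0,1\}$-components; its transitions carry a fixed ``tautological'' $\el$-formula $\theta(x_1,\dots,x_n)$ with $\theta^\m=M^n$, so that every $M^n$-part is admitted. The one case that uses the power of \ml-automata is ``$X_i\subseteq P_\phi$'': the automaton checks that the $X_i$-bit equals $1$ at exactly one position, and at that position it takes a transition labelled by the $\el$-formula $\phi(x_1,\dots,x_n)$ itself (so the position is read iff its $M^n$-part lies in $\phi^\m$), using $\theta$-labelled transitions everywhere else. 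For the induction steps, disjunction is handled by closure under union, negation by closure under complementation, and $\exists X_i$ by projection deleting the component for $X_i$; all three are provided by Lemma~\ref{lem:mlbuechiclosure}. A little bookkeeping is needed to align the alphabets of subautomata before combining them --- a cylindrification that adds an input component which is simply ignored, clearly preserving \ml-B\"uchi recognizability. The correctness of $\mathcal{B}_\chi$ is then the routine induction on $\chi$.

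I do not expect a serious obstacle here: the hard content of any B\"uchi-style equivalence, namely complementation, has already been absorbed into Lemma~\ref{lem:mlbuechiclosure}, so what remains is essentially bookkeeping. The two points that call for a little care are (i) encoding valuations of the free second-order variables as extra components of the input convolution and checking that the inductive invariant survives each construction (including cylindrification and the $\{0,1\}$-to-$M$ folding that relies on admissibility), and (ii) the fact that transitions must carry $\el$-formulas, which forces the minor device of the tautological label $\theta$ for ``alphabet-blind'' transitions. Finally, although the statement asserts only existence, the same induction is effective once the $\el$-theory of $\m$ is decidable --- this is exactly when Lemma~\ref{lem:mlbuechiclosure} yields the complement automaton effectively --- so $\mathcal{B}$ can then be computed from $\psi$, which is what is used in the applications together with Proposition~\ref{prop:emptinessbuechi}.
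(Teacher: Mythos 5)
Your proposal is correct and follows essentially the same route as the paper: pass to an MSO$_0$-style normal form, give small \ml-B\"uchi automata for the atomic formulas (with a tautological $\el$-label where the $M^n$-part is irrelevant), and discharge $\vee$, $\neg$, $\exists$ via the closure properties of Lemma~\ref{lem:mlbuechiclosure}; your explicit remarks on cylindrification and on folding the $\{0,1\}$-components into $M$ via admissibility only make precise what the paper leaves implicit. One small correction: for the atom $X_i \subseteq P_\phi$ your automaton additionally enforces that $X_i$ is a singleton, which violates your own inductive invariant (it wrongly rejects, e.g., the empty interpretation of $X_i$); the paper instead uses a one-state loop labelled by the implication $\phi_1(x_i) \to \phi(x_1,\ldots,x_n)$, realizing the genuine subset semantics. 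Your version still yields the right language for sentences, since in the normal form this atom only arises from first-order atoms $P_\phi(x)$ and is therefore always guarded by $\Sing$, but the loop-with-implication automaton is the cleaner choice and keeps the base case of the induction literally true.
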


\begin{proof}
We first modify \ml-MSO to the expressively equivalent formalism of \ml-MSO$_0$-formulas in complete 
analogy to the definition of chain$_0$ logic in Section~\ref{sec_term}. We proceed by induction over MSO$_0$-formulas. 

For the induction basis, we consider the atomic formulas $X_i \subseteq X_j$, $\Sing(X_i)$, $\Succ(X_i, X_j)$, $X_i \preceq X_j$, and $X_i \subseteq P_{\phi}$ and specify \ml-B{\"u}chi automata that recognize the sets of $\omega$-words defined by these formulas. To exemplify, we give the automaton for $X_i \subseteq P_{\phi}$, which checks that when the $i$-th component is $1$, the letter vector satisfies the \ml-formula $\phi$, which defines the letter predicate $P_{\phi}$.
\begin{center}
  \setlength{\unitlength}{.8ex}
  \begin{picture}(6,8)
    \label{fig:XsubsetP}
    \gasset{Nframe=y,AHnb=1}
    \node[Nmarks=ir](q0)(3,4){$q_0$}
    \drawloop[loopangle=0](q0){$\phi_1(x_i) \to \phi(x_1,\ldots,x_n)$}
  \end{picture}
\end{center}
For the induction step, we consider the connectives $\vee$ and $\neg$, as well as the existential quantifier $\exists$. Here, we can exploit the closure properties of \ml-B{\"u}chi automata from Lemma~\ref{lem:mlbuechiclosure}, and employ the constructions for the union, complementation, and projection, respectively.
\end{proof}

As a relation $R \subseteq (M^{\omega})^n$ is representable by a convolution as an $\omega$-word over $M^n$, Remark~\ref{rem:AtoFBuechi} and Proposition~\ref{prop:FtoABuechi} yield the following result.

\begin{theorem}\label{theo:maintheobuechi}
A relation $R \subseteq (M^{\omega})^n$ with $n\geq 1$ of $\omega$-words is \ml-MSO definable  iff 
it is \ml-B{\"u}chi-recognizable. The transformation in both directions is effective.
\end{theorem}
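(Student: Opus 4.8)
The plan is to observe that Theorem~\ref{theo:maintheobuechi} is an immediate consequence of the two preceding results once the convolution encoding is taken into account. Recall that a relation $R \subseteq (M^\omega)^n$ is, by definition, \ml-MSO definable exactly when there is an \ml-MSO sentence $\psi$ (over a signature of $\el$-formulas with $n$ free variables) such that $L_R = L(\psi)$, and it is \ml-B\"uchi recognizable exactly when there is an \ml-B\"uchi automaton $\mathcal{B}$ over the alphabet $M^n$ with $L_R = L(\mathcal{B})$. In both cases the object of interest is the $\omega$-language $L_R = \{\langle \alpha_1,\ldots,\alpha_n\rangle \mid (\alpha_1,\ldots,\alpha_n)\in R\} \subseteq (M^n)^\omega$, so it suffices to show that an $\omega$-language over $M^n$ is defined by some \ml-MSO sentence iff it is recognized by some \ml-B\"uchi automaton.

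First I would handle the direction from automata to logic: given an \ml-B\"uchi automaton $\mathcal{B}$, Remark~\ref{rem:AtoFBuechi} directly supplies an \ml-MSO sentence $\psi$ with $L(\mathcal{B}) = L(\psi)$, hence $L_R = L(\psi)$ whenever $L_R = L(\mathcal{B})$. For the converse direction, given an \ml-MSO sentence $\psi$, Proposition~\ref{prop:FtoABuechi} supplies an \ml-B\"uchi automaton $\mathcal{B}$ with $L(\psi) = L(\mathcal{B})$, hence $L_R = L(\mathcal{B})$ whenever $L_R = L(\psi)$. Combining the two inclusions yields the stated equivalence. Effectivity in both directions is inherited from the constructions in Remark~\ref{rem:AtoFBuechi} and Proposition~\ref{prop:FtoABuechi}, which are explicitly described as effective adaptions of the classical translations.

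Since the honest mathematical content of the theorem has already been discharged in Remark~\ref{rem:AtoFBuechi} and Proposition~\ref{prop:FtoABuechi}, there is no real obstacle here; the only point that warrants a sentence of care is that the theorem speaks about \emph{relations} $R \subseteq (M^\omega)^n$ while the two cited results speak about \emph{$\omega$-languages} over $M^n$, so I would make the correspondence $R \mapsto L_R$ explicit (it is a bijection between relations on $\omega$-words and $\omega$-languages over $M^n$ that preserves and reflects both notions of definability) and then simply chain the two results together. The case $n = 1$ is subsumed with no change, as the convolution of a single $\omega$-word is that word itself.

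\begin{proof}
By the definition of the convolution, a relation $R \subseteq (M^\omega)^n$ and the $\omega$-language $L_R \subseteq (M^n)^\omega$ determine each other; moreover, by the definitions of \ml-MSO definability and \ml-B\"uchi recognizability for relations, $R$ is \ml-MSO definable iff $L_R = L(\psi)$ for some \ml-MSO sentence $\psi$, and $R$ is \ml-B\"uchi recognizable iff $L_R = L(\mathcal{B})$ for some \ml-B\"uchi automaton $\mathcal{B}$ over $M^n$. Hence it suffices to prove that an $\omega$-language over $M^n$ is of the form $L(\psi)$ for some \ml-MSO sentence $\psi$ iff it is of the form $L(\mathcal{B})$ for some \ml-B\"uchi automaton $\mathcal{B}$, effectively in both directions. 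If $L_R = L(\mathcal{B})$, then by Remark~\ref{rem:AtoFBuechi} there is an effectively constructible \ml-MSO sentence $\psi$ with $L(\mathcal{B}) = L(\psi)$, so $L_R = L(\psi)$ and $R$ is \ml-MSO definable. Conversely, if $L_R = L(\psi)$, then by Proposition~\ref{prop:FtoABuechi} there is an effectively constructible \ml-B\"uchi automaton $\mathcal{B}$ with $L(\psi) = L(\mathcal{B})$, so $L_R = L(\mathcal{B})$ and $R$ is \ml-B\"uchi recognizable. This establishes the equivalence for all $n \geq 1$, the case $n = 1$ being included since the convolution of a single $\omega$-word is that word itself.
\end{proof}
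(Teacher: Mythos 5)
Your proposal is correct and follows exactly the paper's route: the paper also obtains this theorem as an immediate consequence of Remark~\ref{rem:AtoFBuechi} and Proposition~\ref{prop:FtoABuechi} via the convolution correspondence $R \mapsto L_R$. You merely spell out the (trivial) bookkeeping between relations and $\omega$-languages that the paper leaves implicit.
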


As a consequence of the \ml-B{\"u}chi theory, we obtain that satisfiability and 
equivalence of \ml-MSO-formulas over models from $M^\omega$ are 
decidable if the $\el$-theory of the structure $\m$ is decidable.


\subsection{Strong $\m$-$\el$-B\"uchi automata}


In the second part of this section, we extend -- as far as possible -- 
the techniques and results to a 
slightly stronger model of B\"uchi automaton. While the B\"uchi automata above 
are appropriate for treating the structures $\MweakE$, a stronger model is motivated by 
the study of strong tree iterations $\MstrongE$ in which the clone predicate 
enters. Recall that it allows to single out those elements of $M^*$ which are 
of the form $u \ m \ m$. Thus, when reading a ``letter'' $m$ along a path, we need to incorporate 
the feature to ``remember'' whether this current input letter $m$ coincides with the 
previous one.  

We define the notion of {\em strong $\m$-$\el$-B\"uchi automaton} over 
$n$-tuple input letters (i.e., with input alphabet $M^n$, $M$ being the domain of $\m$). The format is the same as for standard B{\"u}chi automata over $M^n$ as mentioned above, except for the transitions. 
For each state pair $(p,q)$ 
 the possible transitions are defined by a formula $\phi_{pq}(x_1, \ldots, x_n, 
y_1, \ldots , y_n)$ -- or, in the special case of an initial transition, by 
a formula $\phi_{q_0 q}(x_1, \ldots x_n)$. Starting with the latter case, 
the automaton can proceed from $q_0$ to $q$ with input letter $(m_1, \ldots, m_n)$ 
if $\m \models \phi[m_1, \ldots, m_n]$. For a transition of the first case, 
in which a previous input letter exists and is $(m^-_1, \ldots, m^-_n)$, 
the automaton can move from $p$ to $q$ if $\m \models \phi[m_1, \ldots, m_n, m^-_1, \ldots, m^-_n]$.
All other notions are copied from the case of (standard) $\m$-$\el$-B\"uchi automata. 

We can reprove the basic decidability and closure properties only under rather radical 
restrictions, namely just for the logic $\el$ = MSO 
and for the case of input letters from $M$ (rather than $n$-tuples of such 
letters). We only give a rough outline; in the present paper 
we do not apply these automata to chain logic over tree structures. 
  
First let us state the basic decidability result.

\begin{lemma}
If the MSO-theory of $\m$ is decidable, 
the emptiness problem for strong  $\m$-MSO-B\"uchi automata over $M$ 
is decidable. 
\end{lemma}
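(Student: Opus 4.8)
The plan is to reduce, in two stages, the emptiness of a strong $\m$-MSO-B\"uchi automaton $\mathcal{B}$ over $M$ to the satisfiability of a single MSO sentence over $\m$, which is decidable by hypothesis.

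First I would observe that, since the transition formulas $\phi_{pq}(x,y)$ are MSO-formulas of some bounded quantifier rank $k$, a run of $\mathcal{B}$ on an input word $\alpha\in M^\omega$ depends on $\alpha$ only through the sequence of MSO-$k$-types $\tp^{\m}_{k}(\alpha(i),\alpha(i-1))$ of consecutive pairs (together with the $k$-type of $\alpha(0)$): whether a transition $(p,q)$ is available at step $i$ is decided by whether $\phi_{pq}$ belongs to this type, a purely syntactic test. As there are only finitely many such 1- and 2-types, one can build -- by the usual finite-alphabet constructions -- an ordinary B\"uchi automaton $\mathcal{D}$ over the finite "type alphabet" that accepts exactly those type-sequences which drive some successful run of $\mathcal{B}$. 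Then $L(\mathcal{B})\neq\emptyset$ holds iff there is an $\alpha\in M^\omega$ whose associated type-sequence is accepted by $\mathcal{D}$.

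The second and essential stage is to decide this last condition, and it is here that the restriction to a single input letter (rather than $M^n$ with $n>1$) and the full strength of MSO over $\m$ are used. Since $L(\mathcal{D})$ is $\omega$-regular, I would decompose it via transition profiles, exactly as in the proof of Lemma~\ref{lem:mlbuechiclosure}, into a finite union $\bigcup_\ell U_\ell V_\ell^\omega$ of products of $\sim_{\mathcal{D}}$-classes with $U_\ell V_\ell^\omega\subseteq L(\mathcal{D})$; it then suffices to decide, for each $\ell$, whether some $\alpha\in M^\omega$ has a type-sequence in $U_\ell V_\ell^\omega$. The point is that this can be phrased as an MSO sentence over $\m$ itself. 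Using that "a finite $M$-segment from $e$ to $e'$ realizes a type-word with a prescribed transition profile" is an MSO-definable relation of $e,e'$ over $\m$ -- it is reachability from $(e,\mathbf{1})$ to $(e',\mu)$ in a suitable finite product of $M$ with the transition-profile monoid, and transitive closure is MSO-definable -- the prefix part (existence of an initial segment realizing a word of $U_\ell$ and ending in some element $e_0$) becomes an MSO formula; and the tail part (existence of an infinite chain $e_0,e_1,e_2,\ldots$ in which consecutive $e_m,e_{m+1}$ are joined by a segment realizing a word of $V_\ell$) is captured by the greatest-fixed-point formulation $\exists Z\,(e_0\in Z \wedge \forall x\in Z\,\exists y\in Z\,(\ldots))$, which is again MSO. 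Conjoining these and disjoining over $\ell$ yields one MSO sentence over $\m$; deciding it settles emptiness.

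The main obstacle is this last translation, and in particular the treatment of the infinite tail: one cannot assume that the witnessing input word $\alpha$ -- or even the realizing "joints" $e_m$ -- are ultimately periodic, so the naive "guess a lasso inside $\m$" does not work; the correct device is to express the existence of the infinite chain of segments as membership of the starting element in a post-fixed point of an MSO-definable operator. This is also exactly the place where single-letter input is crucial: for input alphabet $M^n$ with $n>1$ the analogous reduction would have to account for the joint evolution of several coupled $M$-valued tracks across consecutive positions, and -- as the undecidability results in Section~\ref{sec_equlev} show -- this cannot be done.
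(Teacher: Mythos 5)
Your proof is correct in outline, but it takes a genuinely different route from the paper's. The paper disposes of this lemma in three lines by invoking the Muchnik--Walukiewicz theorem: since the MSO-theory of $\Mstrong$ is decidable whenever that of $\m$ is, and since a strong automaton's run along an $\omega$-word over $M$ can be simulated by an MSO formula describing a path through $\Mstrong$ (the clone predicate recovers the ``previous letter'' as the unique clone among the siblings of the current node), nonemptiness reduces to a single MSO sentence over $\Mstrong$. What you supply instead is the ``direct'' proof that the paper explicitly mentions but declines to give: abstract the input to a finite alphabet of MSO-types of consecutive letter pairs, apply the Ramsey/transition-profile decomposition to the resulting ordinary B\"uchi automaton, and then interpret realizability of a profile word back inside $\m$ via MSO-definable reachability in a product of $M$ with the finite profile monoid, handling the infinite tail by a post-fixed-point formula rather than a lasso. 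Your diagnosis of the two pressure points is exactly right: accepted words need not be ultimately periodic (so the fixed-point device is unavoidable), and the whole interpretation lives inside $\m$ only because set quantification ranges over $M$ rather than $M^n$ --- which is precisely where the construction, and decidability, break for $n>1$. The trade-off: the paper's argument is short but rests on a heavy external theorem; yours is self-contained modulo Ramsey and makes visible why $\el=$~MSO and $n=1$ are both needed. One point you should make explicit (the paper does so in its complementation argument for strong automata): the transition profile of a $V_\ell$-block is only well defined \emph{relative to the letter preceding the block}, so your relation ``a segment from $e$ to $e'$ realizes a word of $V_\ell$'' must take $e$ as the context letter for the block's first transition; only then do the blocks concatenate soundly at the joints $e_0,e_1,\ldots$ of your infinite chain.
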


\begin{proof}
The proof of this lemma can either be given directly, or by invoking the 
above-mentioned Muchnik-Walukiewicz result (\cite{sem84,wal02}). 
It states -- under the assumption that 
the MSO-theory of $\m$ is decidable -- that the MSO-theory of $\Mstrong$ is 
decidable. The nonemptiness of a strong B\"uchi automaton over $M$ can be 
decided by checking existence of a suitable path through $\Mstrong$. 
\end{proof}

\begin{lemma}
If the MSO-theory of $\m$ is decidable, the 
class of $\omega$-languages recognized by strong $\m$-MSO-B\"uchi automata 
over $M$ is effectively closed under the Boolean operations and 
definable projections $p: M \rightarrow M$.
\end{lemma}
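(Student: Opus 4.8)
The plan is to mirror the development for standard \ml-B\"uchi automata from the first subsection, but now taking into account that each transition may look at \emph{two} consecutive input letters. The key observation is that a strong \m-MSO-B\"uchi automaton $\mathcal{B}$ over $M$ can be simulated by an ordinary \ml-B\"uchi automaton $\mathcal{B}'$ (with $\el = $ MSO) over the alphabet $M^2$: the component $\omega$-word fed to $\mathcal{B}'$ is the ``shifted pairing'' $\beta$ with $\beta(i) = (\alpha(i),\alpha(i-1))$ for $i \geq 1$ and $\beta(0) = (\alpha(0),\alpha(0))$ (say), so that a transition formula $\phi_{pq}(x,y)$ of $\mathcal{B}$ becomes the letter predicate $\phi_{pq}(x_1,x_2)$ used on the pair alphabet. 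Conversely, not every $\omega$-word over $M^2$ is a legal shifted pairing, so one has to intersect with the \ml-B\"uchi recognizable ``consistency language'' $K_{\mathrm{cons}} = \{\beta \in (M^2)^\omega \mid \forall i\, \beta(i{+}1)\!\downarrow_2 = \beta(i)\!\downarrow_1\}$, which is defined by the MSO-letter-predicate $x_2 = x_1'$ relating successive positions --- this is expressible since equality is available in MSO over $\m$ and the automaton can pass a guessed copy of the first component to the next step. I would first make this translation precise in both directions, establishing a bijection between runs of $\mathcal{B}$ on $\alpha$ and runs of $\mathcal{B}'$ on the shifted pairing of $\alpha$ that lie inside $K_{\mathrm{cons}}$.

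Granting that translation, closure under Boolean operations follows from Lemma~\ref{lem:mlbuechiclosure}: given strong automata $\mathcal{B}_1, \mathcal{B}_2$ over $M$, pass to $\mathcal{B}_1', \mathcal{B}_2'$ over $M^2$, form union, intersection (via complementation and union), or complement there, and then convert back --- where ``converting back'' means re-reading a pair-transition formula $\phi(x_1,x_2)$ as the strong-transition formula $\phi_{pq}(x,y)$ with $x = x_1$, $y = x_2$. For complementation one must be slightly careful: the complement of $L(\mathcal{B})$ \emph{within} $M^\omega$ corresponds, under the pairing map, to the complement of $L(\mathcal{B}_1')$ \emph{intersected with} $K_{\mathrm{cons}}$, so one complements in $(M^2)^\omega$ and then intersects with $K_{\mathrm{cons}}$ before translating back. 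For the definable projection $p \colon M \to M$, note that $p$ lifts to a map $M^2 \to M^2$ acting coordinatewise, applying projection to a pair-transition label $\phi(x_1,x_2)$ yields $\exists x_1 \exists x_2 (x_1 = x_1' \wedge x_2 = x_2' \wedge \phi(x_1,x_2) \wedge p(x_1) = \cdots)$; more simply, since $p$ is \emph{definable} in $\m$, the image language is again recognized by replacing each label $\phi$ by its $p$-projection, and one checks the resulting pair-word is still consistent.

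A subtlety worth spelling out is the treatment of the \emph{initial} transition, whose formula $\phi_{q_0 q}(x_1,\ldots,x_n)$ has only $n$ (here: one) free variable because there is no predecessor letter. In the pairing encoding this is exactly the first position, where we set $\beta(0) = (\alpha(0),\alpha(0))$ (or mark it specially); so the initial transitions of $\mathcal{B}'$ use the one-variable formula $\phi_{q_0 q}(x_1)$ on the first component only, and all subsequent transitions use two-variable formulas. This is cleanly handled by adding a distinguished ``start'' predicate to the \ml-MSO signature, or equivalently by letting $\mathcal{B}'$ leave $q_0$ only via designated first-step transitions --- a standard device that does not affect the closure constructions.

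The main obstacle I anticipate is precisely why this argument does \emph{not} extend to input alphabets $M^n$ with $n > 1$, and one should be careful not to prove too much. With alphabet $M^n$, the clone predicate would have to compare the current $n$-tuple with the previous $n$-tuple componentwise, and the resulting ``consistency'' constraint on the paired alphabet $(M^n)^2 = M^{2n}$ forces synchronization \emph{across} the $n$ tracks in a way that encodes, e.g., the intersection of two recognizable relations along a common path --- this is the source of the undecidability phenomena alluded to later in the paper. So the proof must use, at exactly the point where $K_{\mathrm{cons}}$ is introduced, that with $n = 1$ the consistency language is a \emph{letter}-to-\emph{letter} equality of single $\m$-elements, which MSO over $\m$ handles, whereas for $n > 1$ one would be tying together several \m-\el-Büchi recognizable relations and Proposition~\ref{prop:emptinessbuechi} no longer applies. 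I would state the lemma's proof so that this dependence on $n = 1$ is visible, and remark that the general case is deferred (and in fact fails) --- consistent with the paper's announced undecidability results for $\MstrongE$.
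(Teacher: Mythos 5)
Your reduction has a genuine gap at its foundation: the ``consistency language'' $K_{\mathrm{cons}}$ is \emph{not} \ml-B\"uchi recognizable when $M$ is infinite. A standard \ml-B\"uchi automaton has finitely many states, and its transition labels are $\el$-formulas evaluated on the letter tuple at a \emph{single} position; once a state sequence is fixed, the constraints imposed on distinct positions are independent. Consequently, a relation such as $\{\,(a_0,b_0)(a_1,b_1)\mid b_1=a_0\,\}$ would have to be a finite union of ``rectangles'' $\phi^\m\times\psi^\m$, which the diagonal-type condition $b_1=a_0$ is not for infinite $M$. ``Passing a guessed copy of the first component to the next step'' is exactly the operation a finite-state device over an infinite alphabet cannot perform -- it is precisely the extra power the strong model adds, and the reason the paper introduces strong automata as a new machine instead of coding them into standard ones. (If $K_{\mathrm{cons}}$ were recognizable, strong automata would reduce wholesale to standard \ml-B\"uchi automata and the undecidability results of Section~\ref{sec_equlev} could not stand.) Your closing paragraph compounds this: letter-to-letter equality across consecutive positions is not a letter predicate even for $n=1$, so it does not locate the real obstruction to $n>1$, which in the paper is the emptiness test for strong automata (decidable over $M$ via Muchnik--Walukiewicz, undecidable over $M^n$ with $n>1$).

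The encoding itself can be salvaged, but only by dropping $K_{\mathrm{cons}}$ entirely. Writing $\sigma(\alpha)$ for the shifted pairing, one has $L(\mathcal{B})=\sigma^{-1}(L(\mathcal{B}'))$; preimages commute with Boolean operations; and the $\sigma$-preimage of \emph{any} standard \ml-B\"uchi automaton over $M^2$ is realized by a strong automaton over $M$ (reinterpret each letter predicate $\phi(x_1,x_2)$ as a transition formula in current/previous letter, substituting appropriately at the initial step). So one complements $\mathcal{B}'$ over $M^2$ by Lemma~\ref{lem:mlbuechiclosure} and pulls back; since $\sigma(\alpha)$ is consistent by construction, no intersection with $K_{\mathrm{cons}}$ is ever needed. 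Note that this repaired route is genuinely different from the paper's, which reruns B\"uchi's Ramsey argument directly using context-dependent ``strong transition profiles'' $\tp_\alpha([i,j])$ that record the preceding letter $\alpha(i-1)$. Finally, your projection step is also shaky as written: for a non-injective definable $p$, replacing $\phi(x,y)$ by $\exists x'\exists y'(p(x')=x\wedge p(y')=y\wedge\phi(x',y'))$ allows consecutive transitions to choose inconsistent preimage witnesses, so that case needs separate care.
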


\begin{proof}
This claim is shown in precise analogy to the case of standard 
$\m$-$\el$-B\"uchi automata (and we skip here the repetition of proofs), 
except for the closure under complement. Here we describe the necessary 
modifications. 

The approach is the same as for the standard case, 
i.e., via B\"uchi's original method involving finite colorings and 
Ramsey's Theorem. However, the coloring of a segment of an 
$\omega$-word over the alphabet $M^n$, i.e., the transition profile, is 
defined differently. Given a strong B\"uchi automaton ${\cal A}$, 
the ``strong transition profile'' 
of the segment $\alpha[i,j]$ 
of an $\omega$-word $\alpha$ refers also to the 
last previous letter $\alpha(i-1)$ if $i > 0$. This extra context 
information is needed in order to capture the clone predicate on the 
$n$ components of $\alpha$, and we define the transition profile of a segment 
relative to this context information within $\alpha$. So an appropriate 
notation for a strong transition profile is $\tp_\alpha([i,j])$ rather than 
$\tp(u)$. Such profiles, however, are of the same type as the previously 
defined profiles (namely, presented as two sets of pairs of states). The
transition profile of a segment $\alpha[i,j]$ is fixed from the 
state pairs $(p,q)$ that allow a run of the automaton from $p$ to $q$ 
(respectively, a run from $p$ to $q$ via a final state), where   
in the first move the letter $\alpha(i-1)$ is used. (This condition 
is dropped for the case $i=0$.) 

There is, of course, a definite conceptual difference to the usual 
coloring of segments in terms of standard transition profiles: 
There, one may concatenate any sequence of segments (for given transition 
profiles) to obtain a new composed segment whose transition profile 
is induced by the given ones. In the new setting, the composition 
of segments $u$ and $v$ only works when the clone information on 
the last letter of $u$ agrees with the first letter of $v$. However, 
this does not affect the argument in B\"uchi's complementation proof: 
Here we only need that for any {\em given} $\alpha$ one can obtain 
a sequence $i_0 <  i_1 <  \ldots$ such that all segments 
$\alpha[i_j,i_{j+1} -1]$ share the same transition profile, 
and that for such a sequence, the transition profiles of 
$\alpha[0, i_0 -1]$ and of $\alpha[i_0, i_1 - 1]$ determine 
$\alpha$ either to be accepted of not to be accepted by 
the B\"uchi automaton.  

Also the sets $U_{\tau_0} \cdot U_\tau^\omega$ can be used as before
when defined properly:  
Such a set is {not} obtained by freely concatenating 
a segment $u \in U_{\tau_0}$ and a sequence of segments from $U_\tau$; 
rather, it is the set 
$$U_{\tau_0} \cdot U_\tau^\omega = \{\alpha \mid \exists i_0, i_1, \ldots (0 < i_0 < i_1 < \ldots \wedge 
\tp_\alpha[0,i_0-1] = \tau_0 \wedge  \tp_\alpha[i_j, i_{j+1}-1] = \tau \text{ for }j= 0,1, \ldots)\}$$
The effective presentation of the complement of $L({\cal A})$ is now completed
as in the preceding subsection for \ml-B\"uchi automata.  
\end{proof}

In Section~\ref{sec_equlev} below we shall see that these results fail for the case of an infinite alphabet $M^n$ with infinite $M$ and $n>1$.


\section{Weak Tree Iterations}\label{sec_weaktree}


In this section, we want to show that for the weak tree iteration with 
equal level relation, the chain theory with $\el$ on siblings is decidable if the 
$\el$-theory of $\m$ is. 

With the preparations of Section~\ref{sec_mlaut}, we will establish 
a reduction from chain logic formulas over tree models to \ml-MSO 
over $\omega$-sequences (and then to B\"uchi automata).

To avoid heavy notation, we employ chain$_0$ logic as introduced in 
Section~\ref{sec_term}, and provide the following construction. Recall that for 
a chain $c$ in $\MweakE$, the object $\hat{c}$ is a pair of sequences over $M$
coding the path underlying the chain $c$, respectively the membership 
of nodes of this path in $c$. 
 
\begin{lemma}
For any chain$_0$-formula  $\phi(X_1,\dots,X_n)$ over $\MweakE = (M^*,S,\preceq,R_1^*,\dots,R_k^*,E)$ 
with $\el$ on siblings, one can construct an $\ml$-MSO-formula $\phi'(Y_1,Z_1,\ldots,Y_n,Z_n)$ interpreted in $\omega$-words over $M^{2n}$ such that for all chains $c_1,\dots,c_n$ we have:
\begin{center}
 $\MweakE \models \phi [c_1,\dots,c_n]$\\
 $\textrm{if and only if }\underline{\langle \hat{c}_1,\ldots,\hat{c}_n\rangle} \models \phi'(Y_1,Z_1,\ldots,Y_n,Z_n).$
\end{center}
\end{lemma}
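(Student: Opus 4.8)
The plan is to proceed by induction on the structure of the chain$_0$-formula $\phi$, translating each construct into the corresponding \ml-MSO construct over the convolution $\langle \hat c_1,\ldots,\hat c_n\rangle \in (M^{2n})^\omega$, where the pair of sequence variables $(Y_i,Z_i)$ plays the role of the $i$-th chain $c_i$ (with $Y_i$ tracking the path-direction component $\alpha_i$ and $Z_i$ the $0/1$-membership component $\beta_i$). First I would fix the bookkeeping: every chain variable $X_i$ corresponds to the pair $(Y_i,Z_i)$; membership of a tree node in $c_i$ is read off from $Z_i$; and equality of tree nodes, the prefix relation, the successor relation, and the equal-level relation $E$ all have to be expressed purely in terms of the encoded $\omega$-words. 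The key observation making this work is that a node $w$ in chain $c_i$ is a prefix of the path $\alpha_i$, so $w$ is completely determined by its \emph{length} $\ell$ together with the path $\alpha_i$; thus a singleton chain $\{w\}$ is coded by a $Z_i$ that is $1$ exactly at position $\ell$ (and $0^\omega$ after). This reduces reasoning about tree nodes to reasoning about positions in $\mathbb{N}$ plus the direction-sequences.

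Next I would handle the atomic formulas one by one. For $X_i\subseteq X_j$: we need that $\alpha_i$ and $\alpha_j$ agree on every prefix up to each $c_i$-node and that $Z_i\le Z_j$ pointwise on those positions — more precisely, for every position $s$ with $Z_i(s)=1$ we need $Z_j(s)=1$ and $Y_i(t)=Y_j(t)$ for all $t\le s$; all of this is first-order over $(\mathbb{N},0,<,S)$ together with equality tests on the $M$-components, which is available since $\phi_=(x,x')$ (equality on $M$) and the admissibility predicates $P_0,P_1$ are among the $\el$-formulas we may use. For $\Sing(X_i)$: exactly one position of $Z_i$ carries a $1$. For $\Succ(X_i,X_j)$ and $X_i\preceq X_j$: both are singletons $\{x_i\},\{x_j\}$, say marked at positions $s_i,s_j$ in $Z_i,Z_j$; we require $s_j=s_i+1$ (resp. $s_i\le s_j$) and $Y_i,Y_j$ agree on $[0,s_i]$ (resp. $[0,s_i]$), i.e., $x_i$ is the relevant prefix of $x_j$'s path. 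For $E(X_i,X_j)$ (via its singleton encoding): the unique $1$-positions of $Z_i$ and $Z_j$ coincide, i.e., $s_i=s_j$ — this is where the equal-level relation becomes trivial in the encoding, which is the whole point of the construction. For the structure relations $R^*_\mu$ and, more generally, for the $\el$-on-siblings quantifiers: here we invoke the letter predicates $P_\phi$. An atom $R^*_\mu(X_{i_1},\ldots,X_{i_\ell})$ with singletons says there is a common father $z$ with $x_{i_\mu}=z m_\mu$ and $R_\mu(m_1,\ldots,m_\ell)$ in $\m$; in the encoding this means all the $1$-positions agree at some $s$, the paths $Y_{i_1},\ldots,Y_{i_\ell}$ agree on $[0,s-1]$, and at position $s$ the letter tuple $(Y_{i_1}(s),\ldots,Y_{i_\ell}(s))$ lies in the predicate $R_\mu^\m$ — exactly a $P_\phi$ test for $\phi(x_{i_1},\ldots,x_{i_\ell}):\equiv R_\mu(x_{i_1},\ldots,x_{i_\ell})$. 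The $\el$-on-siblings quantifiers are handled the same way: a quantifier of $\el$ relativized to the $S$-successors of a node $z$ becomes, after encoding, a $P_\phi$ predicate for the corresponding $\el$-formula evaluated on the letter at the appropriate position, since the siblings of $z$ are exactly the letters $zm$, $m\in M$, i.e., the possible values of the next direction-letter.

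For the induction step the connectives $\neg,\vee$ translate literally, and the chain quantifier $\exists X_i\,\psi$ becomes $\exists Y_i\exists Z_i(\,\mathrm{valid}_i(Y_i,Z_i)\wedge\psi'\,)$, where $\mathrm{valid}_i$ is an \ml-MSO-formula asserting that $(Y_i,Z_i)$ is a legal chain-encoding: $Z_i$ has the shape (finite block of arbitrary bits, but) the $Y_i$-path is eventually $0^\omega$ after the last $1$ of $Z_i$, and $Z_i$ is $0$ beyond that last $1$; for the empty chain, $Z_i=0^\omega$ and $Y_i=0^\omega$. Writing $\mathrm{valid}_i$ is routine \ml-MSO, using $P_0$ to test the direction-letter $0$.

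I expect the main obstacle to be purely organizational rather than deep: one must set up the encoding of tree nodes carefully enough that all of $\subseteq,\Succ,\preceq,E$, and the $\m$-atoms are simultaneously and correctly expressible — in particular making sure the ``$\alpha_i$ agrees with $\alpha_j$ on the common prefix'' conditions are stated so that a subchain really is recognized as a subset, and that the singleton/non-singleton cases in $\Succ$ and $X_i\preceq X_j$ are faithfully captured (these atoms carry an implicit $\Sing$ requirement by definition of chain$_0$ logic). The slightly subtle point is that a node in $c_i$ need not be the \emph{last} node of $c_i$, so the prefix-agreement quantifiers must range over all $1$-positions of $Z_i$, not just the maximal one; once this is handled consistently, the correspondence ``$\MweakE\models\phi[c_1,\ldots,c_n]$ iff $\underline{\langle\hat c_1,\ldots,\hat c_n\rangle}\models\phi'$'' follows by a straightforward induction, using at the atomic level the fact that each relevant predicate of $\m$ (including those named by $\el$-on-siblings quantifiers) is one of the $\el$-formulas $\phi$ whose letter predicate $P_\phi$ is part of the \ml-MSO signature. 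Finally, everything is effective: given $\phi$, the formula $\phi'$ and the finite set $\Phi$ of $\el$-formulas it uses are produced by a simple recursion.
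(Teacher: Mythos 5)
Your proposal is correct and follows essentially the same route as the paper: induction over the structure of chain$_0$-formulas with $\el$ on siblings, translating each atomic formula into an \ml-MSO-formula over the pairs $(Y_i,Z_i)$ encoding path and membership, and handling the connectives and chain quantifiers via closure of \ml-MSO under Boolean operations and projection. You spell out more of the atomic cases than the paper does, and your explicit relativization of $\exists X_i$ to legal chain-encodings via $\mathrm{valid}_i$ makes precise a detail the paper leaves implicit in its appeal to closure under projection.
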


\begin{proof}
We proceed by induction over the structure of chain$_0$-formulas with $\el$ on siblings over $\MweakE$.

For the induction basis we have to consider the atomic formulas, namely of the form 
$\Sing(X)$, $X_i \subseteq X_j$, $X_i \preceq X_j$, $R_i^*(X_1,\dots,X_k)$, $E(X_i,X_j)$, and also the $\el$-formulas $\gamma(x_{i_1},\dots,x_{i_\ell})$.

As a first example, we present the translation into $\ml$-MSO-formulas for the formula $\phi(X)=\Sing(X)$: Given the encoding $\hat{c}=(\alpha,\beta)$ of a chain $c$, the formula $\phi_{\Sing}'(X)$ has to express that $\beta$ indicates membership in $c$ exactly once. Thus, we obtain $\phi_{\Sing}'(Y,Z)=\exists s\big(Z(s) \wedge \forall t(t \neq s \to \neg Z(s))\big)$. 

For the case of an $\el$-formula $\gamma(x_{i_1},\dots,x_{i_\ell})$, we capture $x_{i_1},\ldots,x_{i_\ell}$ by corresponding singletons $X_{i_1},\ldots,X_{i_\ell}$, and these in turn by pairs $(Y_{i_1},Z_{i_1}),\ldots,(Y_{i_\ell},Z_{i_\ell})$ consisting of a path $Y_{i_j} \in M^\omega$ and a singleton set indicator $Z_{i_j} \subseteq \{0,1\}^\omega$ each. We have to define a corresponding predicate $P_\gamma \subseteq ((M \times \{0,1\})^n)^\omega$ by an $\ml$-MSO-formula\eat{$\phi_{P_\gamma}'$} that expresses in terms of the $Y_{i_j}$, $Z_{i_j}$ that there is a common $S$-predecessor $z$ of the elements $x_{i_j}$ and that the tuple $x_{i_1},\ldots,x_{i_\ell}$ satisfies $\gamma$. In intuitive notation, we have
$$
\phi_{P_\gamma}'(Y_1,Z_1,\ldots,Y_n,Z_n)= \bigwedge_{j=1}^\ell \text{``$(Y_{i_j},Z_{i_j})$ is singleton containing $x_{i_j}$''} \wedge \exists z \bigwedge_{j=1}^\ell \text{``}S(z,x_{i_j})\text{''} \wedge \gamma(x_{i_1},\ldots,x_{i_\ell})
$$

In some more detail:
\begin{multline*}
\bigwedge_{j=1}^\ell \phi_{\Sing}'(Y_{i_j},Z_{i_j}) \wedge \exists x_{i_1}\ldots\exists x_{i_\ell}\exists s \big(Z_{i_j}(s) \wedge Y_{i_j}(s)=x_{i_j} \wedge \bigwedge_{j' \neq j} \forall t < s\, (Y_{i_j}(t)=Y_{i_{j'}}(t)) \wedge \gamma(x_{i_1},\ldots,x_{i_\ell}) \big)
\end{multline*}

The induction step then is straightforward, as $\ml$-MSO is closed under the Boolean operations and projection. 
\end{proof}
Thus, we obtain a reduction of the chain$_0$-theory  with $\el$ on siblings of $\MweakE$ to the $\ml$-MSO theory, which with Theorem~\ref{theo:maintheobuechi} is decidable if the $\el$-theory of $\m$ is decidable. This leaves us to conclude this section with the following theorem:

\begin{theorem}\label{theo:weaktreeiterations}
If the $\el$-theory of $\m$ is decidable, the 
chain-theory of $\MweakE$ with $\el$ on siblings is decidable.
\end{theorem}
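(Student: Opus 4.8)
The plan is to chain together the reduction lemma just proved with the automata-theoretic results of Section~\ref{sec_mlaut}. First I would recall that chain logic and chain$_0$ logic over $\MweakE$ have the same expressive power, with an effective translation between them, so that it suffices to decide, for a given chain$_0$-sentence $\phi$ over $\MweakE$ with $\el$ on siblings, whether $\MweakE \models \phi$.

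Given such a sentence $\phi$, I would apply the preceding lemma. Since $\phi$ is a sentence, every chain variable is quantified; tracing through the inductive construction, each quantifier $\exists X_i$ turns into an $\ml$-MSO quantification $\exists Y_i \exists Z_i$ relativized to the predicate ``$(Y_i,Z_i)$ is the code $\hat{c}$ of some chain $c$''. This wellformedness predicate is itself $\ml$-MSO-definable: one expresses, using the letter predicate for the symbol $0$ on the relevant component, that $Z_i$ describes a subset of the path $Y_i$ and that beyond the last membership position (if any) the path $Y_i$ is constantly $0$, with $Y_i \equiv 0$ in case $Z_i$ is empty. Hence one obtains an $\ml$-MSO sentence $\phi'$ — over $\omega$-words over $M$, padding with one dummy component if needed so as to stay within the regime $n\ge 1$ of Theorem~\ref{theo:maintheobuechi} — such that, by the lemma, $L(\phi')$ is (the dummy-padded image of) the set of codes of chain tuples satisfying $\phi$. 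Consequently $\MweakE \models \phi$ if and only if $L(\phi')\neq\emptyset$.

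Now I would invoke Theorem~\ref{theo:maintheobuechi} to construct effectively an $\ml$-B\"uchi automaton $\mathcal{B}$ with $L(\mathcal{B}) = L(\phi')$, and then Proposition~\ref{prop:emptinessbuechi}: since the $\el$-theory of $\m$ is decidable, nonemptiness of $\mathcal{B}$ is decidable. Combining, we decide whether $\MweakE \models \phi$, and therefore the chain-theory — equivalently the chain$_0$-theory — of $\MweakE$ with $\el$ on siblings is decidable.

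As for obstacles: the substantive work has already been done, namely the horizontal/vertical decomposition realizing the $\el$-on-siblings quantifiers as letter predicates $P_\gamma$ (in the lemma preceding this theorem) and the B\"uchi-style complementation for $\ml$-automata (Lemma~\ref{lem:mlbuechiclosure}). What remains for the theorem itself is essentially bookkeeping; the only point genuinely requiring a line of justification is that the class of wellformed chain codes is $\ml$-MSO-definable, so that the existential chain quantifiers of $\phi$ translate correctly into quantifiers over codes.
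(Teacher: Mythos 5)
Your proposal is correct and follows essentially the same route as the paper: translate the chain$_0$-sentence via the preceding lemma into an \ml-MSO sentence over codes of chains, then use Theorem~\ref{theo:maintheobuechi} and Proposition~\ref{prop:emptinessbuechi} to decide nonemptiness of the resulting \ml-B\"uchi automaton. Your explicit observation that the chain quantifiers must be relativized to the \ml-MSO-definable set of wellformed codes $(Y_i,Z_i)$ is a detail the paper leaves implicit, and it is handled correctly here.
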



\section{Undecidability Results}\label{sec_equlev}


In the previous sections we showed decidability of the model-checking problem
for chain logic with $\el$ on siblings over tree structures $\MweakE$, given a 
structure $\m$ with decidable $\el$-theory for some logic $\el$. 

The first result of this section shows that this does not 
extend to strong tree iterations $\MstrongE$ (even if we confine ourselves 
to first-order logic in place of chain logic). 

The second result shows another limitation to decidability:
In the  ``horizontal dimension'' of tree models, we may (in Theorem \ref{theo:weaktreeiterations})
use 
 $\el$-quantifiers ranging over children of given nodes. We show 
  that for the case $\el =$ MSO we lose decidability when the horizontal quantification 
 is extended to an entire tree level. Here we get undecidability for the weak 
 tree iteration.
 
For the first result we use a reduction 
from the termination problem of 2-counter machines (or 2-register machines). 
Such a machine $M$ is given by a finite sequence 
$$ 1 \ {\rm instr}_1; \ldots; k-1 \ {\rm instr}_{k-1}; k \  {\rm stop}$$
where each instruction ${\rm instr}_j$ is of the form 
\begin{itemize}
\item 
Inc$(X_1)$, Inc$(X_2)$ 
(increment the value of $X_1$, respectively $X_2$ by 1),  or
\item 
Dec$(X_1)$, Dec$(X_2)$ (similarly for decrement by 1, with the 
convention that a decrement of 0 is 0), or 
\item 
If $X_i = 0$ goto $\ell_1$ else to $\ell_2$ (where $i = 1,2$ and $ 1 \leq 
\ell_1, \ell_2 \leq k$, with the natural
interpretation).
\end{itemize}
An $M$-configuration is a triple $(\ell, m,n)$, indicating 
that the $\ell$-th instruction is to be executed and the values 
of $X_1, X_2$ are $m,n$, respectively. A terminating 
$M$-computation (for $M$ as above) 
is a sequence $(\ell_0, m_0, n_0), \ldots, 
(\ell_r, m_r, n_r)$ of $M$-configurations where in each step the update is done
according to the instructions in $M$ and the last 
instruction is the stop-instruction (formally: $\ell_r = k$). 
The termination problem for 2-counter machines asks to 
decide, for any given 2-counter machine $M$, whether 
there exists a terminating $M$-computation that starts with 
$(1,0,0)$ (abbreviated as $M : (1,0,0) \rightarrow {\rm stop}$).
It is well-known that the termination problem for
2-counter machines is undecidable (\cite{min67}).

We turn to the model-checking problem over structures $\MstrongE$. We show 
undecidability when 
$\m$ is the structure ${\cal S} := (\mathbb{N}, \Suc)$ (where $\Suc$ is successor). 

\begin{theorem}
The first-order theory of ${\cal S}_E^*$ with FO on siblings is undecidable. 
\end{theorem}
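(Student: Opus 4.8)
The plan is to encode terminating computations of an arbitrary 2-counter machine $M$ into the structure ${\cal S}_E^*$ and express $M : (1,0,0) \rightarrow {\rm stop}$ by a first-order sentence with FO on siblings, so that decidability of the theory would decide the termination problem. The key geometric idea is that along a path in $M^* = \mathbb{N}^*$ we have a sequence of ``directions'' (the letters $m \in \mathbb{N}$), the equal-level relation $E$ lets us compare positions on different paths that sit at the same depth, and the clone predicate $C$ (built into ${\cal S}^*$) detects when a letter repeats the previous one, i.e.\ when $v = umm$. I would use the depth of the tree as the ``time'' axis of the computation: tree level $t$ carries the $t$-th configuration $(\ell_t, m_t, n_t)$. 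To store a configuration I would use several parallel paths synchronized by $E$: one path whose letter at level $t$ codes the instruction counter $\ell_t \in \{1,\dots,k\}$ (a bounded quantity, so "$R_j$" on the letter, i.e.\ "$\Suc^{j}$ applied to a zero-successor-chain", can be pinned down in FO over ${\cal S}$), and two paths whose value at level $t$ somehow represents the counter contents $m_t$ and $n_t$. The natural way to represent an unbounded counter value in a single node is to have a path descend that many clone-steps: a maximal block of consecutive clones of length $m_t$ starting at level $t$ records the value $m_t$. The clone predicate $C$ plus $\Suc$ on the underlying structure is exactly what is needed to check "increment": the letter at the next relevant level is the $\Suc$ of the current one, and "the value went up by one" becomes "one more clone step is available", while "zero test" becomes "$C$ fails / the block has length $0$". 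The use of FO on siblings is what lets me say "the letter at this node is $\Suc$ of the letter at its sibling" or "this node carries letter $0$", i.e.\ the horizontal arithmetic of ${\cal S}$.

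Concretely I would: (1) fix the format of a "computation path" — a single infinite path $\pi$ of $\mathbb{N}^*$ together with finitely many auxiliary paths branching off at controlled places, all of which a first-order formula can refer to by guessing their branching nodes and asserting the $E$-alignments; (2) write, for each of the finitely many instruction types of $M$, a local consistency formula $\mathrm{step}_j(t)$ that, using $C$, $S$, $\preceq$, $E$ and the sibling-quantifiers over ${\cal S} = (\mathbb{N},\Suc)$, says "if at level $t$ the instruction-counter path carries value $j$, then the configuration at level $t+1$ is the one obtained by executing $\mathrm{instr}_j$"; the increments/decrements are handled via the clone-block-length encoding and $\Suc$, the conditional jumps via comparing the instruction-counter letter; (3) write an initialization formula saying the level-$0$ configuration is $(1,0,0)$ (instruction letter $=1$, both clone-blocks empty); (4) write a termination formula saying there is some finite level $r$ at which the instruction letter equals $k$ (the stop line) and that all levels below $r$ satisfy the step formulas. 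The full sentence is then: there exist (branching nodes of) the auxiliary paths such that initialization holds, all levels up to the halting level satisfy the relevant $\mathrm{step}_j$, and a halting level exists. This sentence is satisfied in ${\cal S}_E^*$ if and only if $M$ has a terminating computation from $(1,0,0)$, so undecidability of the $2$-counter termination problem transfers.

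The main obstacle I expect is making the unbounded counter values genuinely first-order definable. A first-order formula cannot quantify over the set of levels of a clone-block, so I must arrange the encoding so that the only quantifications needed are (a) over finitely many nodes (the branching points of auxiliary paths and the halting node), and (b) bounded "local" quantifications of the form "the next node", "the sibling", "the node at the same level on path $\pi'$" — all expressible with $S$, $C$, $E$ and the FO-on-siblings quantifiers. The trick is that "increment/decrement/zero-test" are each a statement about one clone step, not about a whole block: I do not need to measure $m_t$ globally, I only need the invariant that the clone-block carrying the value $m_t$ and the one carrying $m_{t+1}$ differ by exactly one clone step, which is a local $C$-assertion along the relevant auxiliary path at levels $t$ and $t+1$. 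So the bookkeeping reduces to: each auxiliary value-path must, at every level, either stay a clone of its predecessor or not, and consecutive-level comparisons between the two counter-paths are made via $E$. Getting this alignment exactly right — in particular ensuring that a single auxiliary path can simultaneously represent the counter value at all levels $0,1,\dots,r$ and that $E$-synchronization picks out the intended pairs of nodes — is the delicate part and is where I would spend the care; once it is set up, writing $\mathrm{step}_j$, the initialization, and the halting clause is routine, and decidability of the FO theory of ${\cal S}_E^*$ with FO on siblings would then contradict the undecidability of $2$-counter-machine termination.
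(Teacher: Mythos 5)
Your overall strategy coincides with the paper's: reduce from the termination problem for 2-counter machines, use tree depth as the time axis, represent the run by a small number of paths synchronized via the equal-level relation $E$, and write a first-order sentence (with FO on siblings) asserting the existence of an initialized, locally consistent, halting computation. The difference, and the genuine gap, is your encoding of the counter contents. You propose to represent the value $m_t$ in unary, as the length of a ``maximal block of consecutive clones'' hanging off the level-$t$ node. In the full tree $\mathbb{N}^*$ this object does not exist as a definable entity: \emph{every} node $u'm$ has a clone child $u'mm$, so the clone ray from any node is infinite and there is no intrinsic ``maximal'' block whose endpoint a formula could locate. The block length therefore has to be marked by an explicitly quantified endpoint node, and since you need one such block per time step (the blocks for levels $t$ and $t+1$ cannot live on the same path, because on a single path the tail of the block at level $t$ forces $m_{t+1}=m_t-1$), you would need unboundedly many existentially quantified markers or branching points. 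A first-order prefix cannot supply these, and the alternative pattern ``for every level there exist endpoints $w_t,w_{t+1}$ with $|w_{t+1}|=|w_t|+2$'' does not cohere across levels unless the witnesses are uniquely determined by the structure -- which, as just noted, they are not. So the step you call ``the delicate part'' is not merely delicate; as described it cannot be completed, because comparing the lengths of two unary blocks is not reducible to ``a local $C$-assertion about one clone step'' without definable block boundaries.

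The repair is to drop the unary encoding entirely and exploit the fact that the alphabet of directions is already $\mathbb{N}$: store $m_t$ as the \emph{letter} of the $t$-th node on the counter path $\pi_1$, i.e., the path is literally the sequence $m_0 m_1 \ldots m_r$, determined by its last node $x_1$. Then all updates are genuinely local and first-order. ``Unchanged'' means the level-$(t+1)$ node is the clone of the level-$t$ node (one use of $C$); ``increment'' means the level-$(t+1)$ node is the sibling of that clone carrying the $\Suc$-image of the clone's letter (one use of $C$ plus one FO-on-siblings assertion over ${\cal S}=(\mathbb{N},\Suc)$); the zero test and the initial letters $0$ and $1$ are FO-definable among siblings since $0$ is the unique $\Suc$-minimal element; and the instruction counter ranges over the bounded set $\{1,\ldots,k\}$, so its letters are pinned down as $\Suc^j$ of $0$. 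With this encoding your items (1)--(4) go through with only the three existentially quantified endpoints $x_0,x_1,x_2$ and a universal quantifier over the levels $y_i\prec x_i$, which is exactly the sentence the paper constructs.
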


\begin{proof}
For any 2-register machine $M$ we construct a first-order formula 
$\phi_M$ with FO on siblings such that $M : (1,0,0) \rightarrow {\rm stop}$ 
iff ${\cal S}_E^* \models \phi_M$. 

The idea is to code a computation $(\ell_0, m_0, n_0), \ldots, 
(\ell_r, m_r, n_r)$
by three finite paths of same length, one for each of the 
three components. Each of these paths (namely $\pi_0 = (\ell_0, \ldots, \ell_r), 
\pi_1 = (m_0, \ldots, m_r), \pi_2 = (n_0, \ldots, n_r)$) is determined by its 
last point in the tree structure $\mathcal{S}_E^*$, i.e., by a triple $x_0$, $x_1$, $x_2$ of $\mathcal{S}_E^*$-elements.

We use a formula which expresses
$$\exists x_0 \exists x_1 \exists x_2 (E(x_0, x_1) \wedge E(x_1, x_2) \wedge [x_0, x_1, x_2
\mbox{ code a terminating computation of } M]).$$  

In order to obtain a formalization of the condition in squared brackets, we have to express 
\begin{enumerate}
\item 
the initial condition that $\pi_0$ starts with the son $1$ of the root 
and $\pi_1, \pi_2$ with the son $0$ of the root, 
\item
the progress condition that for each $y_0 \prec x_0$ (giving an instruction number), the corresponding 
$M$-instruction is executed, which involves the vertex $y_0$ and the vertices $y_1 \prec x_1,
y_2 \prec x_2$ on the same level as $y_0$ and their respective successors $z_0, z_1, z_2$ 
on $\pi_0, \pi_1, \pi_2$, respectively, 
\item 
the termination condition that $x_0$ is the number $k$.
\end{enumerate}

Accordingly, we can formalize the condition in squared brackets by a conjunction of three formulas $\phi_1$, $\phi_2$, $\phi_3$ in the free variables $x_0$, $x_1$, $x_2$, making use of the (definable) tree successor relation $S$. 
\begin{itemize}
  \item The formula $\phi_1$ expresses (in first-order logic with FO on siblings) for the root $r$ of the tree model and those three $S$-successors $y_0$, $y_1$, $y_2$, where $y_0 \preceq x_0$, $y_1 \preceq x_1$, $y_2 \preceq x_2$, that $y_0$ is the number $1$ and $y_1$, $y_2$ are the number $0$ (of the model ${\cal S}=(\mathbb{N},\Suc)$).
  \item The formula $\phi_2$ is of the form: 
  \begin{quotation}\noindent
    ``for all $y_0 \prec x_0$, $y_1 \prec x_1$, $y_2 \prec x_2$ with $E(y_0,y_1)$ and $E(y_0,y_2)$, there are tree-successors $z_0$, $z_1$, $z_2$ (i.e., with $S(y_0,z_0)$, $S(y_1,z_1)$, $S(y_2,z_2)$ with $z_0 \preceq x_0$, $z_1 \preceq x_1$, $z_2 \preceq x_2$) that represent the correct update of the configuration $(y_0,y_1,y_2)$.''
  \end{quotation}
  The condition on update is expressed by a disjunction over all program instructions; we present, as an example, the disjunction member for the statement ``$3$~Inc$(X_2)$'':
  \begin{multline*}
    y_0 \text{ is number $3$ in } (\mathbb{N},\Suc) \to z_0 \text{ is number $4$ in } (\mathbb{N},\Suc)\\
    \wedge z_1 \text{ is the clone of }y_1 \wedge z_2 \text{ is the $\Suc$-successor of the clone of }y_2.
  \end{multline*}
  It is easy to formalize this in first-order logic with FO on siblings, similarly for the Dec-instructions and the jump instructions.
  \item The formula $\phi_3$ expresses the third condition and is clearly formalizable in first-order logic with FO on siblings.
\end{itemize}
\end{proof}

This result can also be stated in the framework of strong 
B\"uchi automata (or even strong automata on finite words) when 
the alphabet consists of pairs of natural numbers: 
With each 2-register machine $M$ one associates 
a strong $\mathcal{S}$-MSO-automaton $\mathcal{A}_M$ over $\mathbb{N}^2$
which accepts an input word $(m_0,n_0) \ldots (m_r,n_r)$ if this represents 
the sequence of register values of a terminating computation of $M$; the 
existence of an appropriate sequence of instruction numbers (from $\{1, \ldots, k\}$)
can be expressed by a block $\exists X_1 \ldots \exists X_k$ of MSO-quantifiers. 
(In fact, weak MSO-quantifiers suffice.)    
   
Let us turn to the second undecidability result. We shall confine ourselves to the 
 simplest setting, where the structure $\m$ is just $(\{0,1\}, \{0\}, \{1\})$, 
 i.e., $\MweakE$ and $\MstrongE$ are \emph{both} the binary tree with 
 equal level relation (see also~\cite{tho09}). 
 
\begin{theorem}
The chain theory of the binary tree with equal level relation and 
MSO on tree levels is undecidable. 
\end{theorem}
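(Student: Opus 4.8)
The plan is to reduce an undecidable problem—say the termination problem for 2-counter machines, or directly the undecidability of the monadic theory of the binary tree with the equal-level relation—to the chain theory of the binary tree with $E$ and MSO on tree levels. The key observation is that, although chains are subsets of single paths, the added power of \emph{MSO quantification over an entire tree level} lets us recover the kind of ``horizontal'' set quantification that chain logic alone lacks. A tree level in the binary tree $\{0,1\}^*$ at depth $d$ is a copy of $\{0,1\}^d$, so an MSO-definable subset of a level is essentially an arbitrary set of nodes at that depth; combined with $E$ and the tree ordering $\preceq$, one should be able to simulate an arbitrary monadic second-order predicate over the tree by ``stacking'' level-sets along a path.

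Concretely, I would proceed as follows. First, recall (as stated in the introduction) that the MSO theory of the binary tree expanded by $E$ is undecidable; a clean route is the standard encoding of the $\mathbb{N}\times\mathbb{N}$ grid, where the horizontal edges come from tree successors and the vertical edges are read off via $E$ together with a ``same subtree'' relation, and one then encodes a tiling problem or a 2-counter computation. Second, I would show that every MSO sentence over $(\{0,1\}^*,\preceq,S,E)$ can be translated into an equivalent sentence of chain logic with $E$ and MSO on tree levels. The heart of this translation is to replace a monadic quantifier $\exists X\,\varphi(X)$ over an \emph{arbitrary} set of tree nodes by a quantification scheme that picks, for each level $d$, the slice $X\cap(\text{level }d)$ using the ``MSO on tree levels'' feature, and threads these slices together. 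A natural device: fix a reference chain (a full path) $p$; for each node $v$ on $p$ of depth $d$, the level through $v$ is MSO-definable, so one can quantify over a subset $S_d$ of that level; letting $d$ range along $p$ via chain quantification over $p$, the family $(S_d)_d$ reconstructs $X$. The relations needed in $\varphi$—$u\in X$, $u\preceq v$, $E(u,v)$, $S(u,v)$—are then all expressible because membership ``$u\in X$'' becomes ``$u\in S_{|u|}$'', which is checked at the level containing $u$.

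The main obstacle I anticipate is the bookkeeping in this translation: chain logic can only quantify over a path and subsets thereof, so to speak about the family $(S_d)_{d}$ of level-slices one cannot simply quantify over the ``set of all slices'' at once. I would handle this by a relativization argument in which the single path $p$ acts as an index set and the MSO-on-levels quantifier is invoked \emph{once per position along $p$}—but since chain logic does not allow nesting a genuine second-order quantifier inside a chain quantifier in the naive way, I expect the real work is to show that the particular sentences produced by the grid/tiling encoding only need a \emph{bounded} number of set variables, each of which can be captured as a single MSO-on-levels set that is ``coherent across levels'' (i.e., its restriction to each level is the slice we want), and that coherence across levels is itself expressible using $E$ and $\preceq$. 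Getting this coherence condition right—so that a set declared by an ``MSO on tree levels'' quantifier behaves like a global monadic predicate—is the delicate step; once it is in place, the reduction from the undecidable monadic $E$-theory (equivalently, from 2-counter machine termination) goes through routinely, and the theorem follows.
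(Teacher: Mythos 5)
There is a genuine gap at exactly the point you flag as ``the delicate step,'' and it is not a bookkeeping issue but the heart of the matter. An ``MSO on tree levels'' quantifier produces a subset of \emph{one fixed level}; there is no object in the logic whose ``restriction to each level'' is a prescribed slice, so your proposed coherence condition is not a property of anything the logic can quantify over. Likewise, invoking a level quantifier ``once per position along $p$'' would require unboundedly many nested second-order quantifiers inside a chain quantifier, which a single formula cannot do. So the translation of $\exists X\,\varphi(X)$ for an arbitrary monadic $X$ does not go through as described, and the reduction from the full MSO theory of $(\{0,1\}^*,\preceq,S,E)$ stalls.

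The missing idea (which is how the paper proceeds, following Potthoff--Thomas) is to aim lower and code smarter: reduce from the \emph{weak} MSO theory of the binary tree with $E$ (quantification over \emph{finite} sets), which is already undecidable. A finite set $S$ lies entirely above some level $L$, and one can injectively map each node $v$ above $L$ to the unique node $v'\in v10^{*}$ at level $L$ (the leftmost descendant of $v$'s right successor). This map and the induced images of the successor relations are FO-definable given $L$, so a single existential quantifier over subsets of one sufficiently deep level $L$ simulates $\exists S$ for finite $S$; no threading of slices across levels is needed. With this coding, each weak-MSO set quantifier becomes one MSO-on-levels quantifier (at a level chosen beyond the maximal elements of the sets involved), the whole weak MSO theory of the binary tree with $E$ is interpreted in the FO theory with $E$ and MSO on levels, and undecidability follows. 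Your overall strategy (reduce from the undecidable monadic $E$-theory) is the right one, but without the finite-set-to-antichain coding the crucial quantifier-elimination step fails.
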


\begin{proof}
We use an idea of~\cite{pt93} that allows to 
code a tuple of finite sets of the binary tree up to (and excluding) level $L$ 
by a tuple of subsets of level $L$ itself. In other words, we code 
a subset $S$ of tree nodes before level $L$ by an ``antichain'' $A$ which is a subset of the level $L$ (see Figure~\ref{fig:coding}). 

\begin{figure}
\begin{center}
  \setlength{\unitlength}{.8ex}
  \begin{picture}(40,24)(6,10)
    \gasset{Nframe=y,AHnb=0,Nadjust=wh}
    \drawpolygon(6,12)(21,34)(36,12)			
    \drawccurve(16,19)(20,26)(24,22.5)(22.5,20)		
    \drawccurve(8,12)(16,13.5)(24,12)(16,10.5)		
    \gasset{Nframe=n}
    \node(bul)(22,22.5){$\bullet$}
    \node(v)(20,22.5){$v$}
    \node(vp)(17.5,10){$v'$}
    \drawline[AHnb=1](23,22)(25,20.5)(17.5,12)		
    \node(dummy1)(20,24){}
    \node(S)(41,28){$S$}
    \drawedge[curvedepth=3](dummy1,S){}
    \node(dummy2)(22,12){}
    \node(A)(41,17){$A$}
    \drawedge[curvedepth=3](dummy2,A){}
    \put(40,33){level $0$}
    \put(40,12){level $L$}
  \end{picture}
  \caption{\label{fig:coding}Coding an element of a set $S$ by an element of an antichain $A$.}
\end{center}
\end{figure}
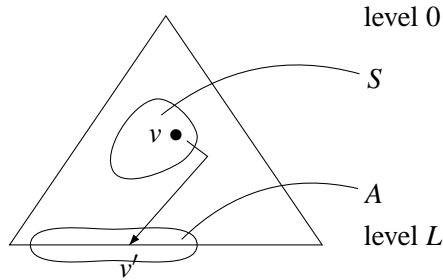 

We simply map a vertex $v$ (before level $L$) to the unique vertex $v' \in L$ which 
belongs to $v 1 0^*$ (i.e., belongs to the leftmost path from 
the right successor of $v$; see again Figure~\ref{fig:coding}). The map $v \mapsto v'$ is injective
and definable in chain logic (even in FO-logic), given the level $L$. 
Moreover, it is easy to see that the relations of being 
left or right successor in the tree are translated to FO-definable 
relations over the level $L$ under consideration. 

Using this coding, an existential quantifier over finite sets  
in the binary tree is captured by an existential quantifier 
over subsets of an appropriate level of the tree (namely, of  
a level that is beyond all maximal elements of the finite set under
consideration). 

Thus, the weak MSO-theory of the binary tree with $E$ is interpretable 
in the FO-theory of the  binary tree $(\{0,1\}^*, S_0, S_1, \preceq, E)$ 
with $E$ and with MSO restricted to levels.
 
Since the weak MSO-theory of the binary tree with $E$ is undecidable 
(see e.g.~\cite{tho90b}), we obtain the claim.  
\end{proof}


\section{Conclusion}\label{sec_con}


In this work, we outlined a theory of generalized B\"uchi automata 
over infinite alphabets. These alphabets are represented by relational structures 
$\m$, the transitions being specified by formulas of a logic $\el$ over $\m$. 
In this setting of $\m$-$\el$-B\"uchi automata (which only slightly generalizes 
that of \cite{bes08}), the nonemptiness problem becomes decidable if the $\el$-theory 
of $\m$ is. An extended model of strong $\m$-$\el$-B\"uchi automata was introduced 
in which a transition via an $\m$-input may depend on the previous $\m$ input. Here 
an essential difference appears between the cases where input letters are from $M$ 
and where  input letters are in $M^n$ for $n> 1$. 

We applied this theory to show that the chain logic theory of the weak tree 
iteration $\MweakE$ of $\m$ (with $\el$ chosen as above) 
is decidable where the equal level relation is adjoined, and quantifications 
of $\el$ over siblings of the tree model are allowed. On the other hand, we showed limits 
for generalization. For example, we showed undecidability 
for the corresponding theory of the strong tree iteration when the underlying 
model is the successor structure of the natural numbers.  

Several problems are raised by this study. Since the logics considered here all 
have nonelementary complexity, it may be interesting to set up fragments and ``dialects''
(e.g.\ in temporal logics) of chain logic where the complexity is better. 
Also, it seems that variants of the model of strong (B\"uchi-) automaton 
should be studied in more depth, for instance by an integration with the 
theory of automata over ``data words'' as developed in \cite{nsv04,bmssd,cg09}.


%


\bibliographystyle{eptcs}
\bibliography{infbib}

\begin{thebibliography}{10}
\providecommand{\bibitemdeclare}[2]{}
\providecommand{\urlprefix}{Available at }
\providecommand{\url}[1]{\texttt{#1}}
\providecommand{\href}[2]{\texttt{#2}}
\providecommand{\urlalt}[2]{\href{#1}{#2}}
\providecommand{\doi}[1]{doi:\urlalt{http://dx.doi.org/#1}{#1}}
\providecommand{\bibinfo}[2]{#2}

\bibitemdeclare{article}{bes08}
\bibitem{bes08}
\bibinfo{author}{A.~B{\`e}s} (\bibinfo{year}{2008}): \emph{\bibinfo{title}{{An
  Application of the Feferman-Vaught Theorem to Automata and Logics for Words
  over an Infinite Alphabet}}}.
\newblock {\sl \bibinfo{journal}{Logical Methods in Computer Science}}
  \bibinfo{volume}{4}, pp. \bibinfo{pages}{1--23}.\doi{10.2168/LMCS-4(1:8)2008}

\bibitemdeclare{inproceedings}{bmssd}
\bibitem{bmssd}
\bibinfo{author}{M.~Bojanczyk}, \bibinfo{author}{A.~Muscholl},
  \bibinfo{author}{T.~Schwentick}, \bibinfo{author}{L.~Segoufin} \&
  \bibinfo{author}{C.~David} (\bibinfo{year}{2006}):
  \emph{\bibinfo{title}{Two-variable logic on words with data}}.
\newblock In: {\sl \bibinfo{booktitle}{Proceedings of the 21th IEEE Symposium
  on Logic in Computer Science (LICS '06)}}, \bibinfo{publisher}{IEEE Computer
  Society}, pp. \bibinfo{pages}{7--16}.\doi{10.1109/LICS.2006.51}

\bibitemdeclare{inproceedings}{buc62}
\bibitem{buc62}
\bibinfo{author}{J.R. B{\"u}chi} (\bibinfo{year}{1962}):
  \emph{\bibinfo{title}{{On a decision method in restricted second order
  arithmetic}}}.
\newblock In: {\sl \bibinfo{booktitle}{Logic, Methodology and Philosophy of
  Science: Proceedings of the 1960 International Congress}},
  \bibinfo{publisher}{Stanford Univ. Press}, pp. \bibinfo{pages}{1--11}.

\bibitemdeclare{article}{cg09}
\bibitem{cg09}
\bibinfo{author}{C.~Choffrut} \& \bibinfo{author}{S.~Grigorieff}
  (\bibinfo{year}{2009}): \emph{\bibinfo{title}{Finite n-tape automata over
  possibly infinite alphabets: Extending a theorem of Eilenberg et al}}.
\newblock {\sl \bibinfo{journal}{Theor. Comput. Sci.}}
  \bibinfo{volume}{410}(\bibinfo{number}{1}), pp. \bibinfo{pages}{16--34}.\doi{10.1016/j.tcs.2008.07.018}

\bibitemdeclare{book}{eft07}
\bibitem{eft07}
\bibinfo{author}{H.-D. Ebbinghaus}, \bibinfo{author}{J.~Flum} \&
  \bibinfo{author}{W.~Thomas} (\bibinfo{year}{2007}):
  \emph{\bibinfo{title}{Einf{\"u}hrung in die mathematische Logik}},
  \bibinfo{edition}{5} edition.
\newblock \bibinfo{publisher}{Spektrum Akademischer Verlag},
  \bibinfo{address}{Heidelberg}.

\bibitemdeclare{inproceedings}{kl06}
\bibitem{kl06}
\bibinfo{author}{D.~Kuske} \& \bibinfo{author}{M.~Lohrey}
  (\bibinfo{year}{2006}): \emph{\bibinfo{title}{Monadic chain logic over
  iterations and applications to pushdown systems}}.
\newblock In: {\sl \bibinfo{booktitle}{Logic in Computer Science, 2006}},
  \bibinfo{organization}{IEEE Computer Society}, pp. \bibinfo{pages}{91--100}.\doi{10.1109/LICS.2006.35}

\bibitemdeclare{book}{min67}
\bibitem{min67}
\bibinfo{author}{M.~Minsky} (\bibinfo{year}{1967}):
  \emph{\bibinfo{title}{Computation: finite and infinite machines}}.
\newblock \bibinfo{publisher}{Prentice-Hall}.

\bibitemdeclare{article}{nsv04}
\bibitem{nsv04}
\bibinfo{author}{F.~Neven}, \bibinfo{author}{T.~Schwentick} \&
  \bibinfo{author}{V.~Vianu} (\bibinfo{year}{2004}):
  \emph{\bibinfo{title}{Finite state machines for strings over infinite
  alphabets}}.
\newblock {\sl \bibinfo{journal}{ACM Trans. Comput. Log.}}
  \bibinfo{volume}{5}(\bibinfo{number}{3}), pp. \bibinfo{pages}{403--435}.\doi{10.1145/1013560.1013562}

\bibitemdeclare{inproceedings}{pt93}
\bibitem{pt93}
\bibinfo{author}{A.~Potthoff} \& \bibinfo{author}{W.~Thomas}
  (\bibinfo{year}{1993}): \emph{\bibinfo{title}{Regular tree languages without
  unary symbols are star-free}}.
\newblock In: {\sl \bibinfo{booktitle}{Proceedings of Fundamentals of
  Computation Theory, FCT '93}}, {\sl \bibinfo{series}{LNCS}}
  \bibinfo{volume}{710}, \bibinfo{publisher}{Springer}, pp.
  \bibinfo{pages}{396--405}.\doi{10.1007/3-540-57163-9\_34}

\bibitemdeclare{article}{rab69}
\bibitem{rab69}
\bibinfo{author}{M.O. Rabin} (\bibinfo{year}{1969}):
  \emph{\bibinfo{title}{Decidability of second-order theories and automata on
  infinite trees}}.
\newblock {\sl \bibinfo{journal}{Trans. Amer. Math. Soc}}
  \bibinfo{volume}{141}(\bibinfo{number}{1}), pp. \bibinfo{pages}{1--35}.\doi{10.2307/1995086}

\bibitemdeclare{article}{ram30}
\bibitem{ram30}
\bibinfo{author}{F.P. Ramsey} (\bibinfo{year}{1930}): \emph{\bibinfo{title}{{On
  a problem of formal logic}}}.
\newblock {\sl \bibinfo{journal}{Proceedings of the London Mathematical
  Society}} \bibinfo{volume}{2}(\bibinfo{number}{1}), p. \bibinfo{pages}{264}.\doi{10.1112/plms/s2-30.1.264}

\bibitemdeclare{inproceedings}{sem84}
\bibitem{sem84}
\bibinfo{author}{A.L. Semenov} (\bibinfo{year}{1984}):
  \emph{\bibinfo{title}{Decidability of monadic theories}}.
\newblock In: {\sl \bibinfo{booktitle}{Proceedings of Mathematical Foundations
  of Computer Science, MFCS '84}}, {\sl \bibinfo{series}{LNCS}}
  \bibinfo{volume}{176}, \bibinfo{publisher}{Springer}, pp.
  \bibinfo{pages}{162--175}.\doi{10.1007/BFb0030296}

\bibitemdeclare{article}{she75}
\bibitem{she75}
\bibinfo{author}{S.~Shelah} (\bibinfo{year}{1975}): \emph{\bibinfo{title}{The
  monadic theory of order}}.
\newblock {\sl \bibinfo{journal}{Annals of Mathematics}} \bibinfo{volume}{102},
  pp. \bibinfo{pages}{379--419}.\doi{10.2307/1971037}

\bibitemdeclare{techreport}{stu75}
\bibitem{stu75}
\bibinfo{author}{J.~Stupp} (\bibinfo{year}{1975}): \emph{\bibinfo{title}{The
  lattice-model is recursive in the original model}}.
\newblock \bibinfo{type}{Technical Report}, \bibinfo{institution}{Institute of
  Mathematics, The Hebrew University}, \bibinfo{address}{Jerusalem}.

\bibitemdeclare{incollection}{tho90b}
\bibitem{tho90b}
\bibinfo{author}{W.~Thomas} (\bibinfo{year}{1990}):
  \emph{\bibinfo{title}{Automata on Infinite Objects}}.
\newblock In \bibinfo{editor}{J.~van Leeuwen}, editor: {\sl
  \bibinfo{booktitle}{Handbook of Theoretical Computer Science: Formal Models
  and Sematics}}, \bibinfo{volume}{B}, \bibinfo{publisher}{Elsevier and MIT
  Press}, pp. \bibinfo{pages}{133--192}.

\bibitemdeclare{inproceedings}{tho90}
\bibitem{tho90}
\bibinfo{author}{W.~Thomas} (\bibinfo{year}{1990}):
  \emph{\bibinfo{title}{Infinite trees and automaton definable relations over
  omega-words}}.
\newblock In: {\sl \bibinfo{booktitle}{Proceedings of the 7th Annual Symposium
  on Theoretical Aspects of Computer Science, STACS '90}},
  \bibinfo{publisher}{Springer}, pp. \bibinfo{pages}{263--277}.\doi{10.1007/3-540-52282-4\_49}

\bibitemdeclare{incollection}{tho97}
\bibitem{tho97}
\bibinfo{author}{W.~Thomas} (\bibinfo{year}{1997}):
  \emph{\bibinfo{title}{Languages, automata, and logic}}.
\newblock In \bibinfo{editor}{G.~Rozenberg} \& \bibinfo{editor}{A.~Salomaa},
  editors: {\sl \bibinfo{booktitle}{{Handbook of formal languages}}},
  \bibinfo{volume}{3}, \bibinfo{publisher}{Springer, New York}, pp.
  \bibinfo{pages}{389--455}.\doi{10.1007/978-3-642-59126-6_7}

\bibitemdeclare{incollection}{tho09}
\bibitem{tho09}
\bibinfo{author}{W.~Thomas} (\bibinfo{year}{2009}): \emph{\bibinfo{title}{Path
  logics with synchronization}}.
\newblock In \bibinfo{editor}{K.~Lodaya}, \bibinfo{editor}{M.~Mukund} \&
  \bibinfo{editor}{R.~Ramanujam}, editors: {\sl
  \bibinfo{booktitle}{{Perspectives in Concurrency Theory}}},
  \bibinfo{series}{IARCS-Universities}, \bibinfo{publisher}{Universities
  Press}, pp. \bibinfo{pages}{469--481}.

\bibitemdeclare{article}{wal02}
\bibitem{wal02}
\bibinfo{author}{I.~Walukiewicz} (\bibinfo{year}{2002}):
  \emph{\bibinfo{title}{Monadic second-order logic on tree-like structures}}.
\newblock {\sl \bibinfo{journal}{Theoretical Computer Science}}
  \bibinfo{volume}{275}(\bibinfo{number}{1-2}), pp. \bibinfo{pages}{311--346}.\doi{10.1016/S0304-3975(01)00185-2}

\end{thebibliography}
\end{document}